\newcommand{\hint}[2]{} %TODO can be removed in the final version
\newcommand{\FixedGap}[1][]{\ifthenelse{\equal{#1}{}}{\textsc{2-Gap}}{\ensuremath{\textsc{Gap}_{#1}}}\xspace}
\title{Exact VC-dimension for $L_1$-visibility\\ of points in simple polygons}
\titlerunning{VC-dimension for $L_1$-visibility} 
\author{Elmar Langetepe%\inst{1} 
\and
Simone Lehmann%\inst{1}
}
\authorrunning{Langetepe et al.} 
\institute{
University of Bonn, Department of Computer Science,
%Friedrich-Ebert-Allee 144,\\
D-53113 Bonn, Germany}
\begin{document}

\maketitle
\begin{abstract}
The VC-dimension plays an important role for the 
algorithmic problem of guarding art galleries efficiently. 
We prove that inside a simple polygon 
at most~$5$ points can be shattered by $L_1$-visibility polygons and 
give an example where 5 points are shattered. 
 The VC-dimension is exactly~$5$. The proof idea for the upper bound 
 is different from previous approaches.\\
 {\bf Keywords:} Art gallery, VC-dimension, $L_1$-visibility, polygons
 %, the number of different cases is 
 %overseeable. 
\end{abstract}

%%%%%%%%%%%%%%%%%%%%%%%%%%%%%%%%%%%%%%%%%%%%%%%%%
\section{Introduction and Definitions}\label{intro-sect}

In this paper we study a visibility problem that is related to efficient algorithmic solutions of the \emph{art gallery problem}.  
Such problems have a long tradition, for example 
one can ask for the minimum set of guards so that the union of visibility
regions covers a simple polygon $P$; see \cite{g-aaagp-10,o-agta-87}. 

The classic $\epsilon$-net Theorem implies that $O(d\;r\log\; r)$ many stationary guards with $360^\circ$ vision are sufficient to cover $P$, provided that each point in $P$ sees at least an $1/r$-th part of the area of $P$. The constant hidden in O is very close to~$1$; see \cite{km-ggwep-97,kpw-atbfe-92}. Here the (constant) $d$ denotes  the 
well-known $VC$-dimension for visibility polygons of points in 
simple polygons. If $d$ is small, only few guards are required. 
The definition of $VC$-dimension goes back to {\bf V}apnik and {\bf C}hervonenkis; see \cite{vc-ucrfe-71}. 
Note that for computing the number of guards required there are also direct approaches that do not make use of this theory.  
Kirkpatrick~\cite{k-ggh-00} obtained an $64\; r \log \log r$ upper bound to the number of (boundary) guards needed to cover the boundary of $P$. 
This was further examined in~\cite{kk-iagsg-11}.

We briefly explain the concept of VC-dimension for visibility polygons 
of points in simple polygons. For $L_2$-visibility two points $p$ and $q$  inside
$P$ are \emph{visible} or \emph{see} each other, 
if the line-segment $pq$ fully lies inside $P$. 
Given a simple polygon $P$ and a finite set $S=\{p_1,p_2,\ldots,p_n\}$ of points in $P$, we say that a subset $T\subseteq S$ can be \emph{shattered} in $P$, if there exists a \emph{viewpoint} $v_T\in P$ such that $v_T$ exactly \emph{sees} all points in $T$ but definitely sees no point in 
$S\setminus T$. If such a viewpoint $v_T$ (or a 
set $V(T)\subseteq P$ of such viewpoints) for any of the $2^n$ subsets $T\subseteq S$ exists, we say that the whole set $S$ can be \emph{shattered}. The VC-dimension $d$ is the maximum cardinality of a set $S$ such that a polygon $P$ exists where \emph{all} subsets $T$ of $S$ can be shattered. 

The VC-dimension is also used in other computational areas. 
 In computational learning theory the
 use of VC-dimension helps for deriving upper and lower bounds on the number of necessary training  examples; see~\cite{kv-iclt-94}.

Some work has been done on the VC-dimension of $L_2$-visibility  in simple polygons.  In~\cite{v-ggwnp-98} $d \in[6,23]$ was shown; compare also~\cite{m-ldg-02}. Figure~\ref{lowerboundL2-fig} shows 
the best known lower bound for~$6$  points that can be shattered. 
At WADS 2009~\cite{gk-nrvsp-09} it was shown that 14 points on the boundary of a Jordan curve cannot be shattered. 
This upper bound was further generalized to $d\leq 14$ for arbitrary point 
in \cite{gk-nubvv-14} . So the current known interval for~$d$ is~$[6,14]$. 
It is an open conjecture that the VC-dimension is exactly~$6$. 
An upper bound of~$6$ was shown for point sets on the boundary  of monotone polygons in~\cite{gkw-vvmp-14} and there are some 
results for external visibility~\cite{ikdv-vev-04}.

%\cite{gk-nrvsp-09}
%\cite{v-ggwnp-98}

In this paper we exactly answer the VC-dimension question for $L_1$-visibility of 
point sets in simple polygons. 
For a point $p\in P$ the $L_1$-visibility polygon of $p$ (the set of all points seen 
from $p$) is always larger than the $L_2$-visibility polygon of $p$. 
Note that the notion of VC-dimension is related to 
the property of \emph{seeing} points but also to the fact of \emph{not-seeing} other points. So there is no direct relationship between
$L_1$- and $L_2$-visibility. 

The proof idea for the upper bound  used here is different from the 
previous results. This is interesting in its own right. 
We show that the subset, $V(S)$, of $P$ that sees all points of $S$ is always path 
connected. Furthermore, the areas of the subsets of $P$ that misses exactly one 
point, say $V(S\setminus\!\!\{p_i\})$, have  a common boundary with $V(S)$. 
This means that any $V(S\setminus\!\!\{p_i\})$ is located along the boundary of $V(S)$. Interestingly, this is independent from 
$L_1$- or $L_2$-visibility. 
For $L_1$-visibility a simple argument already says that 
only 8 such regions around $V(S)$ can exist and no more 
than 8 points can be shattered. 
But we can further lower down the  number of potential areas $V(S\setminus\!\!\{p_i\})$ located 
around 
$V(S)$ to $5$ by considering sets $V(S\setminus\!\!\{p_i,p_j\})$ 
for two sets $V(S\setminus\!\!\{p_i\})$ and $V(S\setminus\!\!\{p_j\})$.
The cardinality~$5$ coincidence with our lower bound example and 
the VC-dimension is exactly~$5$. 

This also means that we even show a slightly stronger result. 
For $L_1$-visibility inside a simple polygon and for a set $S=\{p_1,p_2,\ldots,p_n\}$ of points we can 
shatter all subsets $S\!\!\setminus\{p_i\}$ and  all subsets $S\!\!\setminus\{p_i,p_j\}$ and the set $S$ from $P$ for 
no more than~$n=5$ points and there is an example where this
subset-shattering for $n=5$ is possible. 

In this work the Figures~\ref{lowerbound-fig}, \ref{notvisible-fig} and \ref{lowerboundL2-fig} were generated 
by a visualisation-tool from~\cite{ps-vgp-15}, the corresponding 
software project was supervised by the first author. 

%%%%%%%%%%%%%%%%%%%%%%%%%%%%%%%%%%%%%%%%%%%%%%%%%
\section{Definitions for $L_1$-visibility}\label{prelim-sect}

For a simple polygon $P$ we define $L_1$-visibility and $L_1$-cuts associated to vertices 
(or axis-parallel edges).
\begin{figure*}
\begin{center}
\includegraphics[scale=0.5]{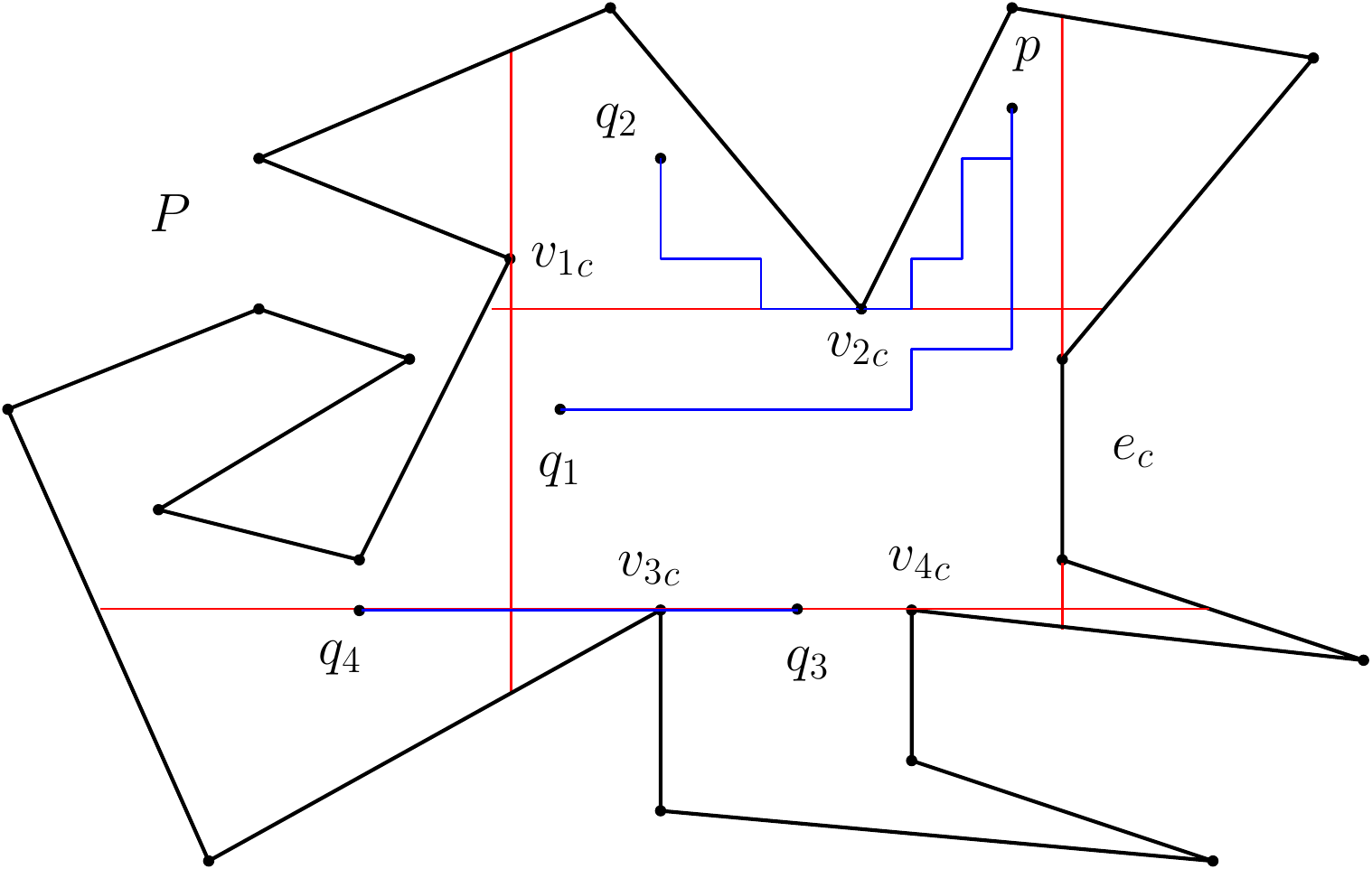}
\caption{The points $p$ and $q_1$ are $L_1$-visible whereas 
$p$ and $q_2$ are not $L_1$-visible because the $L_1$-visibility is 
blocked by the horizontal $L_1$-cut of the locally $Y$-minimal vertex $v_2$.
The vertex itself does not block the visibility along the cut, for example 
$q_3$ and $q_4$ are visible. 
 The axis-parallel locally $X$-minimal edge $e_c$ analogously defines a
vertical $L_1$-cut. With respect to directions the $L_1$-cut 
of ${v_2}_c$ can be labeled by N (north) whereas the $L_1$-cut 
of ${v_1}_c$ is labeled by W (west). Only four directions 
are given. In non-general position a cut can be evoked by different vertices (or edges)
${v_3}_c$ and ${v_4}_c$. 
}
\label{DefinitionL1-fig}
\end{center}
\end{figure*}
Consider two points  $p$ and $q$ inside $P$ 
as given in Figure~
\ref{DefinitionL1-fig}.  If a shortest $L_1$-path between $p$ and $q$ inside $P$ is $X$- \emph{and} $Y$-monotone, 
$p$ and $q$ are denoted as $L_1$-visible inside $P$.
The $L_1$-visibility between two points in $P$ can be blocked 
by  axis-parallel cuts emanating from locally $X$- or $Y$-maximal 
(or  locally $X$- or $Y$-minimal) vertices $v_c$ along the boundary of $P$; 
see Figure~\ref{DefinitionL1-fig} for some examples. 
For such a locally minimal or maximal vertex $v_c$, the axis-parallel cut
emanates in both directions until it hits the boundary. 
If  $v_c$  is minimal or maximal in $Y$-direction, the corresponding 
cut is horizontal,  if $v_c$  is minimal or maximal in $X$-direction, 
the $L_1$-cut is vertical. If $P$ is in general position 
such $L_1$-cuts subdivide the polygon into three disjoint parts. 
If $P$ is allowed to have axis-parallel edges, analogously an $L_1$-cuts  emanate in both directions from a corresponding egde $e_c$. 
Both vertices of $e_c$ are locally maximal or minimal. The cut is 
associated to the edge $e_c$. 

In this paper for convenience we  make use of an general position assumption for the polygon which says that no three vertices are on the same line and two vertices have the same $X$-or $Y$-coordinate, if and only if they share an edge. So we allow axis-parallel edges. 
Please note that all arguments also hold for non-general position. 
In this case an $L_1$-cut can be evoked by different vertices (or edges), see 
${v_3}_c$ and ${v_4}_c$ in Figure~\ref{DefinitionL1-fig}. 
For maintaining the arguments it is sufficient to  associated the cut to a single vertex. 
It is allowed to change this vertex, if this is necessary. 

With respect to directions, there can be at most four different 
kinds of $L_1$-cuts, depending on $X$- and $Y$-maximality 
or $X$- and $Y$-minimality of the corresponding vertex (or edge). For convenience we label the cuts by
the direction $\{N,E,S,W\}$, where the label means that 
the corresponding vertex (edge) \emph{lies} in this direction. 
For example an $L_1$-cut imposed by a  locally 
$Y$-minimal vertex is label by $N$ (north) and so on; 
see also Figure~\ref{DefinitionL1-fig}.

%%%%%%%%%%%%%%%%%%%%%%%%%%%%%%%%%%%%%%%%%%%%%%%%
\section{Lower bound on the VC-dimension}\label{lowerbound-sect}

The lower bound of $5$ is shown by the example given in 
Figure~\ref{lowerbound-fig}. 
Note that the corresponding 
polygon $P$ need not be axis-parallel. The colors of the regions 
inside $P$ indicate the number of points 
that are shattered (red=5,brown=4, light-green=3 and so on).
In order to not overload the figure not all areas are labeled with the 
subset of points that are shattered in $P$. 
The reader can simply check that any of the $2^5=32$ subsets $T$ of $S=\{1,2,3,4,5\}$ 
is shattered by all points in some area $V(T)$ in~$P$. 
\begin{figure*}
\begin{center}
%\centering
\includegraphics[scale=0.4,viewport = 35 35 500 500]{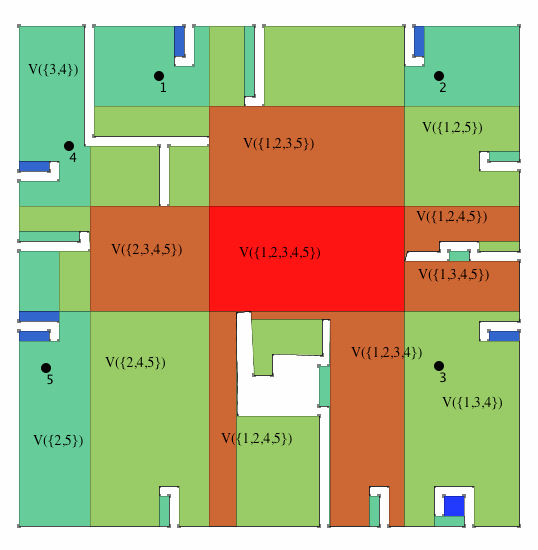}
\caption{Five points that are shattered by $L_1$-visibility polygons 
inside a simple polygon. The colors indicate the number of points 
that are shattered (red=5,brown=4, light-green=3 and so on).
Some regions are labeled by the point set that is precisely visible. 
Altogether, $2^5=32$ disjoint areas are required.  Note that the polygon need not be 
axis-parallel. }
\label{lowerbound-fig}
\end{center}
\end{figure*}
%

%%%%%%%%%%%%%%%%%%%%%%%%%%%%%%%%%%%%%%%%%%%%%%%%
\section{Upper bound on the VC-dimension}\label{upperbound-sect}

Let us assume that inside a simple polygon $P$ a set $S:=\{p_1,p_2,\ldots,p_n\}$  of $n$ points can be shattered by $L_1$-visibility polygons. 
For a subset $T\subseteq S$ let $V(T)\subset P$ 
denote the union of all points in $P$ which 
sees all  points of $T$ but  no point of $S\setminus T$. 
Consider the set $V(S)\subset P$ that sees all points of $P$ (red areas in the examples  of Figure~\ref{lowerbound-fig} and Figure~\ref{notvisible-fig}).
\begin{figure*}
\begin{center}
\includegraphics[scale=0.4,viewport = 50 50 500 500]{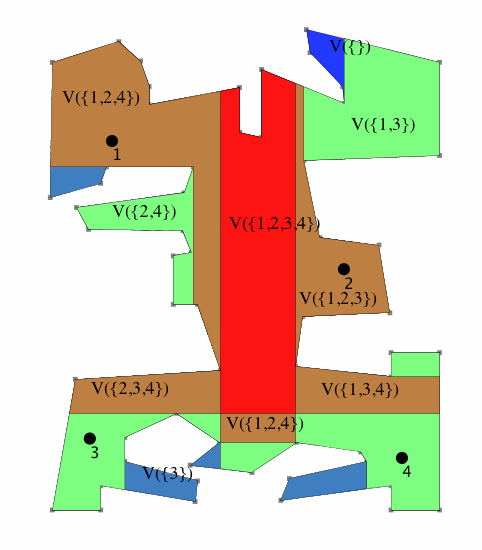}
\caption{Four points $S=\{1,2,3,4\}$ inside $P$ are shattered by 
$L_1$-visibility polygons. 
The union, $V(\{1,2,3,4\})$, of all points in $P$  that sees all points 
of $S$ is path connected but there are points in $V(\{1,2,3,4\})$ that 
do not see each other.  Additionally, the set~$V(\{1,2,4\})$ is not 
path-connected.}
\label{notvisible-fig}
\end{center}
\end{figure*}

We give a precise outline of the proof.
\begin{enumerate}

\item The first observation is that $V(S)$ has to be path-connected. 
This is shown in Lemma~\ref{pathconnected-lem}.
Note that two points in $V(S)$ are not necessarily $L_1$-visible; see Figure~\ref{notvisible-fig}.  

\item The second observation is that for $p_i$, $i=1,\ldots,n$, 
any  $V(S\setminus\!\!\{p_i\})$ inside $P$ (brown areas within our examples) has to share a common boundary with $V(S)$ (the red area).  
This is shown in Lemma~\ref{Boundary-lem}. 
Additionally, the common boundary between any component of 
$V(S\setminus\{p_i\})$ and $V(S)$ stems from a $L_1$-cut labeled with a well-specified 
direction from $\{N,E,S,W\}$.  Note that  $V(S\setminus\{p_i\})$ need 
not be path-connected and can be separated from $V(S)$ in more than one direction. 

\item The third observation is that, if there are two points 
$p_i$ and $p_j$ such that components  $V(S\setminus\{p_i\})$ and 
$V(S\setminus\{p_j\})$ are separated from $V(S)$ 
by $L_1$-cuts of the same direction $X\in\{N,E,S,W\}$,
there can only be a single $L_1$-cut in direction$~X$ 
that contributes to the boundary of $V(S)$ and this cut separates both 
components of $V(S\setminus\{p_j\})$  and $V(S\setminus\{p_i\})$ from $V(S)$. 
The single $L_1$-cut $c$ is evoked by a vertex (or edge) $v_c$ that sees 
both points $p_i$ and $p_j$. This is shown in Lemma~\ref{TwoPointsDirection-lem}. 

\item A direct consequence is the following. The above mentioned
$L_1$-cut $c$ evoked by a vertex (or edge) $v_c$ separates $P$ into three disjoint parts, 
one of which, say $P_{v_c}(V(S))$, contains $V(S)$ and the other two, say $P_{v_c}(V(S\setminus\{p_j\}))$ and $P_{v_c}(S\setminus\{p_i\}))$, contain 
$p_i$ and $p_j$, respectively. 
Additionally, there is no component of a 
$V(S\setminus\{p_k\})$ for  $k\neq i,j$
 that can be separated from $V(S)$ by an $L_1$-cut into  
 direction~$X$. 
 
% Note, that there might still exist points 
% $p_k$ $k\neq i,j$ inside  $P_{p_i}$ or $P_{p_j}$, but for such points an  $L_1$-cut  into a different direction $\neq X$ has to share the boundary 
%between  $V(S\setminus\{p_k\})$ and $V(S)$ for separating $p_k$.

\item Since we only have four different directions, 
starting from $V(S)$, by the above argument we can have at most two points separated 
by a cut in the corresponding direction. Or in other words, 
there are at most 8 different sets $V(S\setminus\!\!\{p_i\})$ which can share the boundary with $V(S)$. 
This already means that the VC-dimension can be at most~8 which is the number of 
subsets of size 7 for a set of 8 points. 
 This is the statement of Corollary~\ref{SimpleConsec-cor}. 
 
\item Finally, 
we  have to do some investigation on the relative position of the 
aforementioned maximal 4 $L_1$-cuts. 
We show that for a fixed combination of a 
horizontal and vertical $L_1$-cut at most
three sets  $V(S\setminus\!\!\{p_i\})$ can be separated. 
Then we argue, that this combination cannot happen again 
in the opposite corner
so that in total only 5 points can survive.  This is shown in the 
proof of the final Theorem~\ref{5Points-theo}. Note that Figure~\ref{lowerbound-fig} exactly matches the worst-case situation.
\end{enumerate}

In the following we will always assume that 
the set $S=\{p_1,\ldots,p_n\}$ is shattered 
by visibility polygons inside a simple polygon $P$ and that 
 $V(T)$ for $T\subseteq S$ is the union of points 
 in $P$ that sees any point in $T$ but no point in $S\setminus T$ as defined as before. It will be explicitly mentioned, if it is necessary 
 to use $L_1$-visibility. 
 Furthermore, w.r.t. the notion of path, path-connected and 
 shortest path the next two Lemmata are actually independent  from the choice 
 of the metric  ($L_1$- or $L_2$), because we
 make use of short-cuts along a line segment that breaks the visibility, only. 
 For convenience let us assume that we consider $L_2$-paths but visibility 
 might be different. 

\begin{lemma}\label{pathconnected-lem} 
The subset $V(S)$ of $P$ is path-connected. 
\end{lemma}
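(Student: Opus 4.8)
Since $S\setminus S=\emptyset$, the set $V(S)$ is just the set of all points of $P$ that see \emph{every} point of $S$ (the intersection of the $n$ individual visibility polygons), and it is non-empty because $S$ is shattered — take the viewpoint guaranteed for $T=S$. The plan is to pick two arbitrary points $a,b\in V(S)$ and exhibit a path inside $V(S)$ joining them; since $a,b$ are arbitrary this gives path-connectivity. Following the remark that precedes the lemma, I would work with the \emph{Euclidean} shortest path: let $\pi$ be the shortest $L_2$-path from $a$ to $b$ inside $P$ — it exists because $P$ is a compact simple polygon. Everything then reduces to the single claim that $\pi\subseteq V(S)$.

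I would prove this claim by contradiction. If $\pi$ left $V(S)$, some point $q$ on $\pi$ would fail to see some $p_i$; since the region of points that see $p_i$ is closed, I may in fact take $q$ in the relative interior of a sub-arc of $\pi$ all of whose points miss $p_i$. Now the region of $P$ that sees $p_i$ is delimited, away from $\partial P$, only by line segments — windows of the visibility polygon in the $L_2$-setting, $L_1$-cuts in the $L_1$-setting — so there is such a segment $c$ (an $L_1$-cut, and in particular axis-parallel, in the $L_1$-case) that separates $q$ from $p_i$, in the sense that $q$ and $p_i$ lie in different connected components of $P\setminus c$. Because $a$ and $b$ both see $p_i$, neither can be cut off from $p_i$ by $c$, so $a$, $b$ and $p_i$ all lie in the same component of $P\setminus c$.

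The heart of the argument is then a standard shortcut along $c$. As $\pi$ runs from $a$ to $b$ it starts and ends in the component of $p_i$ but passes through the component of $q$, so it meets the segment $c$ in at least two points; let $x$ and $y$ be the first and last such points along $\pi$. The whole excursion of $\pi$ into $q$'s component is contained in the subpath $\pi[x,y]$, so $\pi[x,y]$ is not the straight segment $\overline{xy}$ and is therefore strictly longer; and since $x,y\in c$ with $c$ a line segment contained in $P$, we have $\overline{xy}\subseteq c\subseteq P$. Replacing $\pi[x,y]$ by $\overline{xy}$ produces a strictly shorter $a$–$b$ path in $P$, contradicting minimality of $\pi$. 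Hence $\pi\subseteq V(S)$, and $V(S)$ is path-connected. Note that $\pi$ need not be a single segment and points of $V(S)$ need not see one another, as in Figure~\ref{notvisible-fig}; note also that the only geometric input was that the visibility-blocking object is a line segment along which one may take a shortcut inside $P$, which is exactly why the same proof applies verbatim to $L_1$- and to $L_2$-visibility.

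\textbf{Expected main obstacle.} I expect the delicate part to be not the shortcut itself (step three, which is routine once the setup is in place) but the bookkeeping around the separating segment in step two: verifying that the locus ``sees $p_i$'' is bounded in $\mathrm{int}(P)$ only by line segments, that one of these segments genuinely separates $q$ from $p_i$ while keeping $a$, $b$ and $p_i$ in a common component of its complement, and that $\pi$ must meet this segment at least twice — together with the degenerate cases in which $a$, $b$, $q$ or parts of $\pi$ touch $\partial P$ or the segment $c$ itself.
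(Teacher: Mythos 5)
Your proposal is correct and follows essentially the same route as the paper: take a shortest path in $P$ between two points of $V(S)$, observe that any loss of visibility of some $p_i$ along it is caused by a blocking cut (window) that the path must cross twice, and shortcut along that segment to contradict minimality. Your write-up is in fact somewhat more careful than the paper's sketch (e.g.\ choosing $q$ off the cut so the excursion is strictly longer), but the idea and its metric-independence are the same.
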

\begin{proof}
Assume that two points $p,q\in V(S)$ are not connected by 
a path that fully runs inside $V(S)$. This means that along a shortest path 
$\mbox{SP}_P(p,q)$ between $p$ and $q$ inside $P$ there 
will be some point $q_1$ where some $p_i$ is not seen 
 \emph{after} $q_1$ for a while and comes into sight again at some point $q_2$ on $\mbox{SP}_P(p,q)$; compare the sketch in  Figure~\ref{PathConnected-fig}. 
\begin{figure*}
\begin{center}
\includegraphics[scale=0.45]{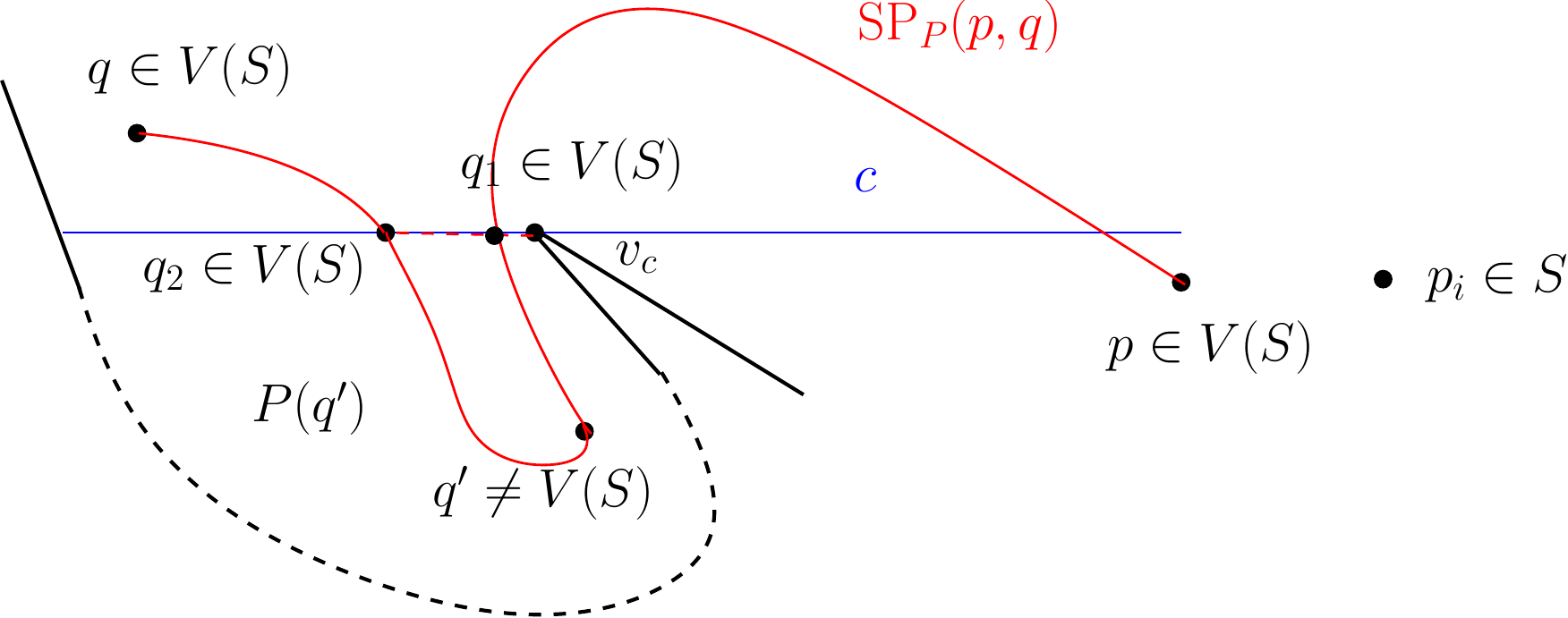}
\caption{We consider a shortest path $\mbox{SP}_P(p,q)$ between two points 
$p,q\in V(S)$. If there is some point $q'$ along the path with $q'\neq V(S)$ there 
is a cut $c$ such that a short-cut for  $\mbox{SP}_P(p,q)$ exist. 
There is always a shortest path between $p$ and $q$ that runs fully inside $V(S)$ and 
$V(S)$ is path-connected.}
\label{PathConnected-fig}
\end{center}
\end{figure*}
More precisely the path from $p$ to $q_1$ will cross some cut $c$ that emanates 
from some vertex $v_c$ (or edge $e_c$). The cut
$c$ blocks the visibility 
of some point $p_i\in S$ for some point $q'$ on $\mbox{SP}_P(p,q)$.
The cut subdivides $P$ 
into a part $P(q')$ that contains $q'$ and a part $P(p,q)$ 
that contains $p$ and $q$. Finally at point $q_2$ the 
path $\mbox{SP}_P(p,q)$ has to cross $c$ and enter 
$P(p,q)$ again in order to let $p_i$ become visible. 
Therefore along $c$ there will be a short-cut along $c$ 
using the segment between $q_1$ and $q_2$. There is a shortest path $\mbox{SP}_P(p,q)$ between $p$ and $q$ that runs fully inside $V(S)$, $V(S)$ is path-connected. \hfill$\qed$
\end{proof}

Note that the above arguments are independent from 
considering $L_1$- or $L_2$-visibility. 
Only  the fact that the 
cut is a line segment and allows a short-cut is used. 

\begin{lemma}\label{Boundary-lem}
Any set $V(S\setminus\!\!\{p_k\})$ of $P$ shares a common boundary with $V(S)$. 
\end{lemma}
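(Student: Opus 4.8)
The plan is to apply the short‑cut technique of Lemma~\ref{pathconnected-lem} to a slightly larger region and then to pin the common boundary down onto an $L_1$‑cut that blocks $p_k$. Write $W_k:=\{x\in P: x \text{ sees every point of } S\setminus\{p_k\}\}$. Because $S$ is shattered, both $V(S)$ and $V(S\setminus\{p_k\})$ are non‑empty, and $W_k$ is their disjoint union: a point of $W_k$ lies in $V(S)$ precisely if it also sees $p_k$, and in $V(S\setminus\{p_k\})$ otherwise. The first step is to observe that $W_k$ is path‑connected; this is verbatim the proof of Lemma~\ref{pathconnected-lem} with the index set $S$ replaced by $S\setminus\{p_k\}$, since along a shortest path between two points of $W_k$ any defect is caused by some $p_i\in S\setminus\{p_k\}$ leaving and re‑entering sight across a single cut, and short‑cutting along that cut removes it.

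Now fix a connected component $C$ of $V(S\setminus\{p_k\})$, pick $w\in C$ and $v\in V(S)$, and choose a path $\gamma\colon[0,1]\to W_k$ from $w$ to $v$. Since the visibility polygon $\mathrm{VP}(p_k)$ is closed, $V(S)=W_k\cap\mathrm{VP}(p_k)$ is relatively closed in $W_k$ and $V(S\setminus\{p_k\})$ is relatively open, so $C$ is open in $W_k$. Let $t^\ast:=\inf\{t:\gamma(t)\notin V(S\setminus\{p_k\})\}$; this is well defined because $\gamma(1)=v\notin V(S\setminus\{p_k\})$. Then $\gamma([0,t^\ast))$ is a connected subset of $V(S\setminus\{p_k\})$ containing $w$, hence lies in $C$, while $z:=\gamma(t^\ast)\in\overline C$ and, by openness of $V(S\setminus\{p_k\})$ in $W_k$, $z\in W_k\setminus V(S\setminus\{p_k\})=V(S)$. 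So $z$ lies in $V(S)$ and is a limit of points of $C$ that do not see $p_k$; consequently $z\in\partial V(S)\cap\partial V(S\setminus\{p_k\})$, and, since the visibility of $p_k$ switches at $z$, $z$ lies on the boundary of $\mathrm{VP}(p_k)$, i.e.\ on an $L_1$‑cut $c$ that blocks $p_k$.

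It remains to promote the single point $z$ to a shared boundary segment and to read off its direction. Take a small disk $D$ centred at $z$ that meets no cut through $z$ other than $c$ and on which the visibility of each $p_j$ with $j\ne k$ is constant; then $c$ splits $D\cap P$ into the side of $p_k$ and the blocked side. Every point of $D$ on the $p_k$‑side sees all of $S$ and hence lies in $V(S)$, while every point of $D$ on the blocked side sees all of $S\setminus\{p_k\}$ but not $p_k$ and hence lies in $V(S\setminus\{p_k\})$; as that piece of $D$ is connected and contains points of $C$ (the points of $C$ near $z$ are on the blocked side), it lies in $C$. Thus $c\cap D$ is common boundary of $C$ and $V(S)$ and it stems from the $L_1$‑cut $c$, whose label in $\{N,E,S,W\}$ is the one pointing from $V(S)$ across $c$ toward $p_k$. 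Running this over all components of $V(S\setminus\{p_k\})$ gives the claim; in non‑general position $c$ may be evoked by several vertices or edges, but, as noted above, it suffices to associate it with one. I expect this last paragraph to be the delicate part — ensuring that the common boundary is a genuine edge of positive length carried by a single cut (and dealing with the possibility that $z$ lands on $\partial P$) rather than an isolated touching point.
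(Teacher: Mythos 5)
Your argument is correct and rests on the same mechanism as the paper's own proof: the short-cut along a blocking cut from Lemma~\ref{pathconnected-lem}, used to confine a path joining $V(S\setminus\{p_k\})$ to $V(S)$ inside their union $W_k$, so that the transition between the two sets cannot pass through any other region $V(S')$. Your extra topological bookkeeping (first-exit point $z$, local disk on the cut $c$) only makes explicit what the paper leaves implicit, and the degenerate case you flag ($z$ lying on several cuts at once) is not treated in the paper either and is harmless since only finitely many cuts exist and $z$ can be perturbed along $c$.
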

\begin{proof} Let us assume that 
$V(S\setminus\!\!\{p_k\})$  and $V(S)$ do not share a common boundary. Thus, for \emph{any} two points $p\in V(S)$ and 
$q\in V(S\setminus\!\!\{p_k\})$ a shortest path $\mbox{SP}_P(p,q)$ 
between $p$ and $q$ in $P$ will leave $V(S)$ 
at some point $q_1$ to enter some $V(S')$ with $S'\neq S\setminus\!\!\{p_k\}$ and finally has to end in $V(S\setminus\!\!\{p_k\})$ at $q$.
This means at $q_1$ at least a point $p_i\in S$ with $p_i\neq p_k$ gets out of sight. 
With similar arguments as in the previous proof, the path 
$\mbox{SP}_P(p,q)$ has to cross some cut $c$ at $q_1$ 
and also at some point $q_2$ again in order to see  
$p_i\in S$ again. 
Therefore again we can short-cut $\mbox{SP}_P(p,q)$ 
by using the direct path between $q_1$ and $q_2$, which 
contradicts the assumption that no shortest path between 
$p\in V(S)$ and 
$q\in V(S\setminus\{p_k\})$  runs in $V(S)\cup V(S\setminus\!\!\{p_k\})$.
The sets $V(S\setminus\!\!\{p_k\})$ and~$V(S)$ share a 
common boundary. \hfill$\qed$
\end{proof} 

Note, that the arguments are again independent from 
$L_1$- and $L_2$-visibility.  
%or -paths.  
Again only the fact that the cut is a line segment that
allows a short-cut is used.

We make use of  $L_1$-visibility right now. 
The above Lemma says that $V(S)$ and any path-connected component 
(maximally path-connected subset) of 
$V(S\setminus\{p_i\})$ share a common edge. 
Obviously this edge has to stem from an $L_1$-cut that blocks the visibility to~$p_i$. Each such cut is labelled by a corresponding direction $\{N,E,S,W\}$ w.r.t.  the
relative position of its generating vertex (or edge).
For example the cut $c$ of vertex $v_c$ in Figure~\ref{PathConnected-fig} is labeled 
by direction $S$ (south). 
In the following for convenience by $V(S\setminus\!\!\{p_i\})$
we denote a path-connected component of the set $V(S\setminus\!\!\{p_i\})$. 

At this point we would like to mention the general position assumption. 
Under general position assumption we have uniqueness of the cuts and the 
corresponding vertices. Note that our arguments can be maintained for 
non-general position assumption as well. 
If there is more than one vertex (or edge) that defines the same $L_1$-cut because the vertices  (or edges) have the same height or 
width, we can make use of a unique vertex or edge that is 
responsible for the $L_1$-cut making the cut and its vertex unique. 
But it is allowed to change this vertex, if this is necessary. 

We now show that w.r.t. a specified direction at most two sets $V(S\setminus\!\{p_i\})$ and  $V(S\setminus\!\{p_j\})$ can be separated from $V(S)$. 
A corresponding  $L_1$-cut 
$c(p_i)$  associated to a vertex $v(p_i)$ subdivides the polygon into three parts, where one part, denoted by  $P_{v(p_i)}(V(S\setminus\!\!\{p_i\}))$, contains the corresponding portion of $V(S\setminus\!\!\{p_i\})$. 

\begin{lemma}\label{TwoPointsDirection-lem}
  If there are two  sets $V(S\setminus\!\{p_i\})$  and $V(S\setminus\!\{p_j\})$ that share 
  a common boundary with $V(S)$ evoked by  $L_1$-cuts
  in the same direction~$X$, there is only a unique, single $L_1$-cut $c(p_i,p_j)$ in direction~$X$ that shares the boundary between $V(S)$ and  both sets $V(S\setminus\!\{p_i\})$ and  $V(S\setminus\!\{p_j\})$. The sets $V(S\setminus\!\{p_i\})$ and  $V(S\setminus\!\{p_j\})$ lie to the 
left and right of the associated vertex (or edge) $v(p_i,p_j)$.  
The points $p_i$ and $p_j$ are $L_1$-visible from $v(p_i,p_j)$.
\end{lemma}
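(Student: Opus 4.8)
The plan is to set up the geometry carefully and then argue by contradiction about the existence of a second cut in the same direction. Without loss of generality I would assume the common direction is $X = S$ (south), so all the relevant $L_1$-cuts are horizontal and the generating vertices (or edges) lie \emph{below} their cuts; $V(S)$ sits above such a cut and the separated component of $V(S\setminus\{p_i\})$ sits below it. First I would fix $q_i \in V(S\setminus\{p_i\})$ and $q_j \in V(S\setminus\{p_j\})$, each adjacent to $V(S)$ along its respective horizontal cut $c(p_i)$, $c(p_j)$, with generating vertices $v(p_i)$, $v(p_j)$. The first key observation is that $v(p_i)$ must itself see $p_i$: the horizontal $L_1$-cut through $v(p_i)$ blocks $Y$-monotonicity of the $L_1$-path from the component below to $p_i$, and $p_i$ lies on the $V(S)$-side (above) together with everything in $V(S)$; since points just above the cut near $v(p_i)$ are in $V(S)$ and thus see $p_i$, and $v(p_i)$ is the obstruction vertex, $v(p_i)$ sees $p_i$ by a limiting/continuity argument along the cut. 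The same holds for $v(p_j)$ and $p_j$.

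Next I would show the two cuts coincide. Suppose $c(p_i) \neq c(p_j)$; since both are horizontal and $P$ is in general position, one lies strictly above the other — say $c(p_j)$ is below $c(p_i)$. Then I would locate a point $p$ in $V(S)$ (which exists since $S$ is shattered) and a point in the $V(S\setminus\{p_j\})$-component below $c(p_j)$, and trace a shortest path between them. Because $V(S)$ is path-connected (Lemma~\ref{pathconnected-lem}) and shares boundary with both components, I can arrange this path to cross $c(p_i)$: but $c(p_i)$ separates $P$ into three parts, one containing $V(S)$ and $p_i$, and crossing into the lower part loses sight of $p_i$ — yet $q_j$ must see $p_i$ (since $i \neq j$ and $q_j \in V(S\setminus\{p_j\})$). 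So $q_j$ lies on the $V(S)$-side of $c(p_i)$, i.e. above it. Combined with $q_j$ being below $c(p_j)$, and $c(p_j)$ below $c(p_i)$, this forces $q_j$ to be both above and below $c(p_i)$ — unless the component of $V(S\setminus\{p_j\})$ we picked is separated by a cut reaching around, which the $L_1$-cut structure (three parts, the cut being a full horizontal chord) rules out. Pushing this bookkeeping through yields $c(p_i) = c(p_j) =: c(p_i,p_j)$, with a single generating vertex (or edge) $v(p_i,p_j)$ — here I invoke the general-position normalization that lets me pick one canonical vertex for a cut.

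Then I would show $v(p_i,p_j)$ sees \emph{both} $p_i$ and $p_j$: this is now immediate, since $v(p_i,p_j) = v(p_i) = v(p_j)$ and I already argued $v(p_i)$ sees $p_i$ and $v(p_j)$ sees $p_j$. Finally, the two components $V(S\setminus\{p_i\})$ and $V(S\setminus\{p_j\})$ lie on opposite sides of $v(p_i,p_j)$ along the cut: the cut $c(p_i,p_j)$ is a horizontal segment with $v(p_i,p_j)$ as its (lower) anchor, and removing the anchor point the cut is locally split into a left piece and a right piece; if both components were on the same side, then $p_i$ and $p_j$ being on the $V(S)$-side and visible from $v(p_i,p_j)$, together with the $X$-monotonicity requirement of $L_1$-paths, would let points in one component see the point of the other — contradicting that those components miss different points. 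So one component is to the left of $v(p_i,p_j)$ and the other to the right.

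I expect the main obstacle to be the second step — proving $c(p_i) = c(p_j)$ rigorously — because $V(S\setminus\{p_i\})$ need not be path-connected and may touch $V(S)$ along several cuts in several directions (as noted after Lemma~\ref{Boundary-lem} and illustrated in Figure~\ref{notvisible-fig}), so one must be careful to argue only about the \emph{specific components} separated in direction $X$ and to use the rigidity of horizontal $L_1$-cuts (each is a maximal horizontal chord partitioning $P$ into exactly three pieces) to preclude the degenerate nesting configurations. The visibility claims for the anchor vertex, by contrast, are a clean limiting argument along the cut, and the left/right separation is a short $X$-monotonicity argument once the single-cut fact is in hand.
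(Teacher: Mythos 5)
Your proposal has a genuine gap, and it sits exactly where you predicted: the step establishing $c(p_i)=c(p_j)$. The argument there rests on the claim that $p_i$ lies on the $V(S)$-side of its own cut $c(p_i)$ (``$p_i$ lies on the $V(S)$-side (above) together with everything in $V(S)$'') and that ``crossing into the lower part loses sight of $p_i$.'' That is an $L_2$-window intuition, not how an $L_1$-cut blocks visibility. A horizontal cut through a locally $Y$-extremal vertex $v$ splits $P$ into three parts and blocks $L_1$-visibility only between points that lie beyond the cut on \emph{opposite sides of $v$}; it blocks nothing between the region above the cut and either region below it. Since crossing from $V(S)$ into the component of $V(S\setminus\{p_i\})$ must lose sight of exactly $p_i$, and the boundary edge stems from $c(p_i)$ (Lemma~\ref{Boundary-lem}), the point $p_i$ necessarily lies \emph{beyond} the cut, on the opposite side of $v(p_i)$ from that component --- i.e.\ on the same side of $c(p_i)$ as the component, not with $V(S)$. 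With the correct geometry your key deduction ``$q_j$ sees $p_i$, hence $q_j$ lies on the $V(S)$-side of $c(p_i)$'' is simply false: $q_j$ can sit below $c(p_i)$ on $p_i$'s side of $v(p_i)$ and still see $p_i$. A secondary weakness is the containment you need (``below $c(p_j)$ and $c(p_j)$ below $c(p_i)$ forces below $c(p_i)$''): the cuts are chords of different horizontal extent, so the region cut off by the lower chord need not lie in any of the three parts determined by the upper chord; your parenthetical about cuts ``reaching around'' does not dispose of this.

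The paper's proof avoids both problems by working with the correct blocking model from the start: each point $p_i$, $p_j$ lies beyond its cut on the side of the vertex opposite its component, and the proof is a case analysis (four relative placements of $p_i,p_j$ with respect to $v(p_i),v(p_j)$, up to mirroring, using that the two distinct same-direction cuts have different heights by general position) showing that in every configuration one of the two cuts blocks $L_1$-visibility to \emph{both} $p_i$ and $p_j$ --- contradicting that the region it cuts off is a component missing exactly one point. Your first step (the evoking vertex sees its point, by a limiting argument along the cut) and your final left/right statement do match the paper, but the left/right conclusion as you argue it also leans on the mistaken ``$p_i$ above the cut'' picture, so the middle of the proof needs to be replaced by an argument of the paper's type rather than repaired locally.
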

\begin{proof}

Assume that  two sets $V(S\setminus\!\!\{p_i\})$ and  $V(S\setminus\!\!\{p_j\})$ 
are connected to $V(S)$ by portions of \emph{different} $L_1$-cuts $c(p_i)$ 
and $c(p_j)$ of the same direction~$X$. Let $v(p_i)$  and $v(p_j)$ 
denote the (unique) vertices (or edges) that evoke $c(p_i)$ 
and $c(p_j)$. 
W.l.o.g. we assume that $X=S$ holds 
and  $c(p_i)$ and $c(p_j)$ are therefore horizontal cuts. 

Since  some point on $c(p_i)$ on the boundary of $P_{v(p_i)}(V(S\setminus\!\!\{p_i\}))$ lies in $V(S)$ and sees $p_i$, 
$p_i$ is $L_1$-visible from $v(p_i)$. Analogously, $p_j$ is $L_1$-visible from $v(p_j)$.
By general position assumption $c(p_i)$ and $c(p_j)$  do not have the same height. W.l.o.g. let the $Y$-coordinate $v(p_i)$  be larger than the $Y$-coordinate of $v(p_j)$, the other case is symmetric. 

Relative to the unique vertices (or edges)  $v(p_i)$  and $v(p_j)$ that evokes 
$c(p_i)$ and $c(p_j)$, the points $p_i$ or $p_j$ lie to the \emph{left} 
or \emph{right} from $v(p_i)$  or $v(p_j)$, meaning that 
$P_{v(p_i)}(V(S\setminus\!\!\{p_i\}))$ or $P_{v(p_j)}(V(S\setminus\!\!\{p_j\}))$ (the
caves containing $V(S\setminus\!\!\{p_i\})$  or $V(S\setminus\!\!\{p_j\})$, respectively) is on the opposite side; see Figure~\ref{OneDirectionB-fig}.
\begin{figure}
\begin{center}
\includegraphics[scale=0.5,viewport = 200 0 300 200]{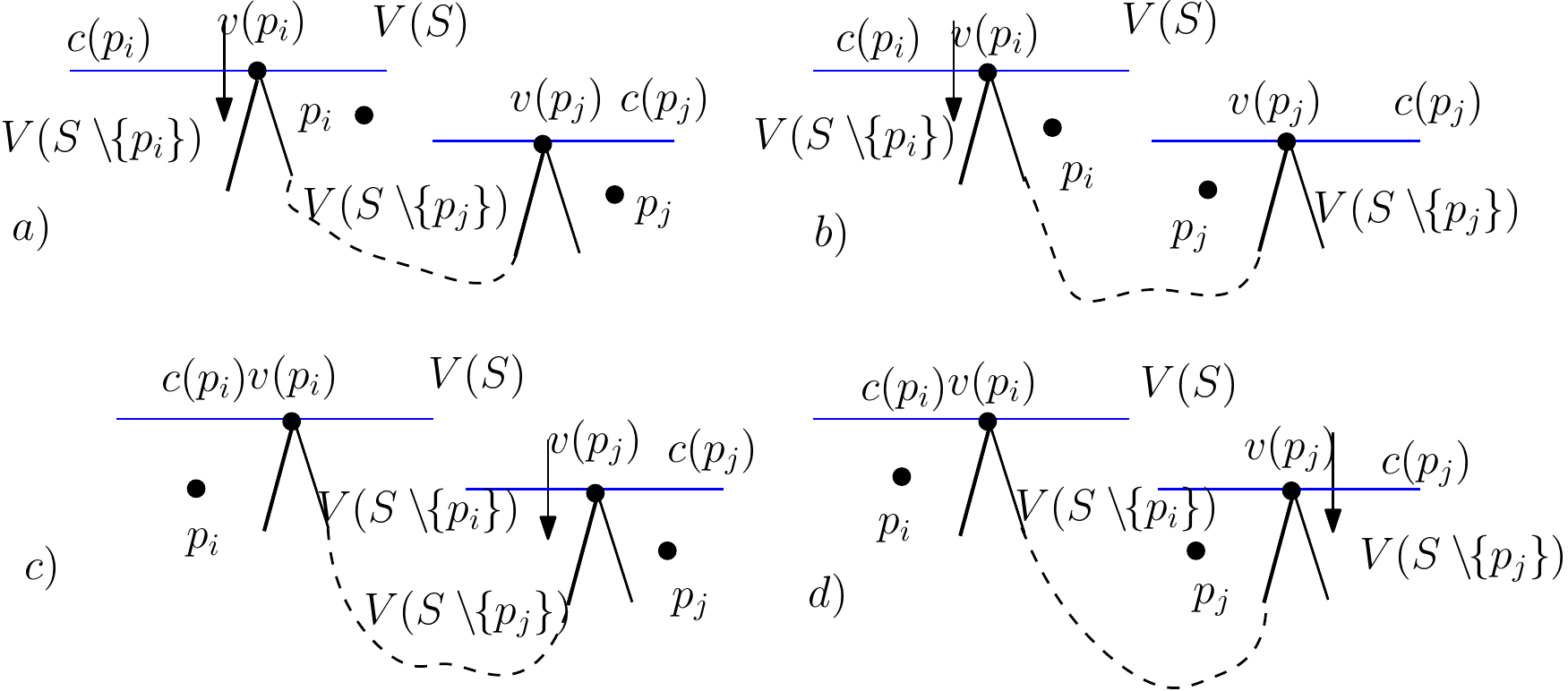}
\caption{Two sets $V(S\setminus\!\{p_i\})$ and  $V(S\setminus\!\{p_j\})$ that
are connected to $V(S)$ by some $L_1$-cuts of the same direction~$X$, 
have to be connected to $V(S)$ by a unique common $L_1$-cut. 
If two different $L_1$-cuts of the same direction have different height, 
there is always one $L_1$-cut that separates both $p_i$ and $p_j$, a contradiction.
Up to mirroring, the cases a)-d) contain the relative positions of $p_i$ and $p_j$ w.r.t. the vertices $v(p_i)$ and 
$v(p_j)$. None of the cases can occur, a single cut has to be responsible for both point sets $V(S\setminus\!\!\{p_i\})$ and  $V(S\setminus\!\!\{p_j\})$.
%By mirroring these four cases we get  all relative positions. 
%The case that $c(p_i)$ is beyond $c(p_j)$ is symmetric. 
}
\label{OneDirectionB-fig}
\end{center}
\end{figure}
Up to mirroring, we now consider  all possible situations of the relative position of $p_i$ and $p_j$ w.r.t. 
$v(p_i)$ and $v(p_j)$; compare Figure~\ref{OneDirectionB-fig}~a)-d). 
In any case (indicated by an arrow) at least one 
of the cuts breaks the $L_1$-visibility to both points $p_i$ and $p_j$, which contradicts the assumption 
that the corresponding cut separates only a single point.  
In any case we have a contradiction to the assumption.\hfill$\qed$

\end{proof}

The above Lemma already implies that
the number of different sets~$V(S\setminus\!\{p_i\})$ 
 that share a common boundary with $V(S)$ for a fixed direction~$X$ can be at most~$2$. Any such pair $(p_i,p_j)$ is 
 separated by a single, unique $L_1$-cut $c$ evoked by some vertex (or edges) $v_c$. Two different such pairs for one direction cannot exist. 
 %If more than~$1$ set  $V(S\setminus\!\!\{p_i\})$ in 
%direction~$X$  is separated there is 
% a single $L_1$-cut $c$ evoked by some vertex (or edges) $v_c$ 
%that builds the boundary of $V(S)$ in this direction. 
 For such a unique $L_1$-cut $c$ there will be one point $p_i$ 
to the left of $v_{c}$ and another point $p_j$ to the right of  $v_{c}$, 
Both points are $L_1$-visible from $v_{c}$. 
For shattering at least $n$ points we require at least $n=\binom{n}{n-1}$ subsets $V(S\setminus\{p_i\})$ around $V(S)$. 

\begin{corollary}\label{SimpleConsec-cor}
For any direction~$X$ at most two sets $V(S\setminus\!\!\{p_i\})$ can share the boundary with $V(S)$. 
The VC-dimension for $L_1$-visibility w.r.t. points in simple polygons is not larger than~$8$. 
\end{corollary}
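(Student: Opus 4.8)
\textbf{Proof plan for Corollary~\ref{SimpleConsec-cor}.}
The first assertion is an immediate restatement of Lemma~\ref{TwoPointsDirection-lem}: if three or more distinct sets $V(S\setminus\!\!\{p_i\})$ shared the boundary with $V(S)$ via $L_1$-cuts of one and the same direction~$X$, then picking any two of them, say $V(S\setminus\!\!\{p_i\})$ and $V(S\setminus\!\!\{p_j\})$, the lemma forces them to be separated from $V(S)$ by one common cut $c(p_i,p_j)$, with $p_i$ to one side of $v(p_i,p_j)$ and $p_j$ to the other. Pairing the third set with $p_i$ (resp. with $p_j$) and invoking the lemma again yields the same unique cut, but then all three points would have to lie on the two sides of a single vertex while being pairwise separated — a contradiction, since the cut separates exactly two of the three parts. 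Hence at most two such sets exist per direction.

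For the numerical bound I would argue as follows. To shatter a set $S$ with $|S|=n$ we in particular need, for every index $i$, a non-empty region $V(S\setminus\!\!\{p_i\})$ somewhere in $P$; and since a point of $V(S\setminus\!\!\{p_i\})$ can reach a point of $V(S)$ by a path in $P$ (indeed $V(S)$ is non-empty and path-connected by Lemma~\ref{pathconnected-lem}), Lemma~\ref{Boundary-lem} guarantees that each $V(S\setminus\!\!\{p_i\})$ shares a common boundary edge with $V(S)$, and that edge lies on an $L_1$-cut blocking exactly $p_i$, which carries one of the four direction labels $\{N,E,S,W\}$. Thus the map $p_i\mapsto(\text{direction of its cut on }\partial V(S))$ is well defined on $\{p_1,\dots,p_n\}$; by the first part of the corollary each of the four labels is the image of at most two indices, so $n\le 4\cdot 2 = 8$.

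Finally I would note that the VC-dimension is by definition the largest $n$ for which such a shattering configuration exists, so the chain above gives $d\le 8$ at once. The only subtlety to be careful about in writing this up is that $V(S\setminus\!\!\{p_i\})$ need not be path-connected: one should phrase the counting as "each \emph{component} of $V(S\setminus\!\!\{p_i\})$ touching $V(S)$ carries a direction label, and at least one such component exists", and then observe that it is enough to fix one touching component per $i$ to run the pigeonhole on the four directions — the possibility that different components of the same $V(S\setminus\!\!\{p_i\})$ touch $V(S)$ from several directions only helps, since we just need one label per point. The main obstacle, such as it is, is packaging the repeated application of Lemma~\ref{TwoPointsDirection-lem} cleanly enough that the "at most two per direction" claim is genuinely a corollary and not a re-proof; everything else is bookkeeping.
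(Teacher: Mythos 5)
Your proposal is correct and follows essentially the same route as the paper: the "at most two per direction" claim is read off from Lemma~\ref{TwoPointsDirection-lem} (a single cut per direction, with the two points on opposite sides of its vertex, so no third point fits), and the bound $n\le 4\cdot 2=8$ then follows by pigeonhole since Lemma~\ref{Boundary-lem} gives every $V(S\setminus\!\!\{p_i\})$ a boundary cut labelled by one of the four directions. Your remark about working with path-connected components is exactly the convention the paper adopts just before the corollary, so nothing is missing.
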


Now assume that we have a maximum number of  sets $V(S\setminus\!\!\{p_i\})$ 
located around $V(S)$, which is~$8$ in total. 
In this case in any direction  two different points $p_i$ and $p_j$ such that 
 $V(S\setminus\!\!\{p_i\})$ and $V(S\setminus\!\!\{p_j\})$ are separated by a single $L_1$-cut of direction~$X$; the situation is sketched in Figure~\ref{EightPairs-fig}. 
\begin{figure}
\begin{center}
\includegraphics[scale=0.5]{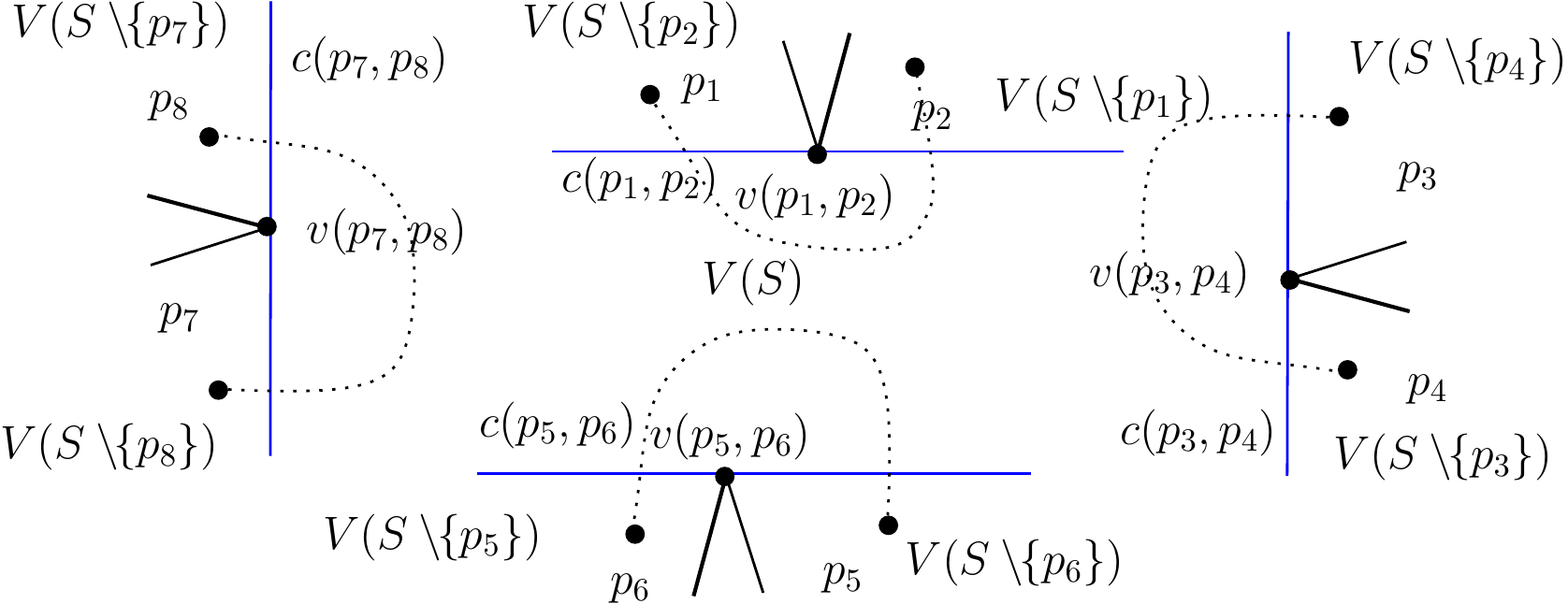}
\caption{The sketch of the worst-case situation where in any direction two different 
sets $V(S\setminus\!\!\{p_i\})$ and $V(S\setminus\!\!\{p_j\})$ build a boundary 
with $V(S)$. This situation will be considered more precisely, 
only~$5$ points can survive. 
}
\label{EightPairs-fig}
\end{center}
\end{figure}
%Note that for example $p_3$ can 
%lie inside $V(S\setminus\!\!\{p_1\})$ but $V(S\setminus\!\!\{p_3\})$
%has to be separated from $V(S)$ in direction~$E$, if there are already 
%exist two sets  $V(S\setminus\!\!\{p_1\})$ and $V(S\setminus\!\!\{p_2\})$ 
%separated in direction~$N$. 

 Finally, we consider the situation of a horizontal \emph{and} a vertical $L_1$-cut 
and the corresponding potential sets $V(S\setminus\!\!\{p_i\})$. 
% pairs of points $(p_1,p_2)$ and $(p_3,p_4)$ more precisely.  
We argue that at most three sets $V(S\setminus\!\!\{p_i\})$ 
 can be attained. Considering two corners or \emph{all} cuts only~$5$ points survive.  
%if the set $S$ contains the 
%points $p_1,p_2,p_3$ and $p_4$ and $S$ can be shattered. 
%We give a proof sketch, the full proof is given in the Appendix. 

\begin{theorem}\label{5Points-theo}
The VC-dimension for $L_1$-visibility w.r.t. points in simple polygons is exactly~$5$. 
\end{theorem}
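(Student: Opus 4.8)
The plan is to build on Corollary~\ref{SimpleConsec-cor} and eliminate three of the eight potential boundary regions $V(S\setminus\{p_i\})$ by a careful geometric analysis of how a horizontal and a vertical $L_1$-cut interact. First I would fix notation: by Lemma~\ref{TwoPointsDirection-lem}, each of the four directions $N,E,S,W$ contributes at most one $L_1$-cut to the boundary of $V(S)$, and that cut is evoked by a single vertex (or edge) which $L_1$-sees the (at most) two points it separates. So in the worst case we have four cuts $c_N, c_E, c_S, c_W$ evoked by vertices $v_N, v_E, v_S, v_W$, with $c_N, c_S$ horizontal and $c_E, c_W$ vertical, and a pair of points assigned to each. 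The key local claim is: for a fixed \emph{corner}, say the combination of the horizontal cut $c_S$ (vertex $v_S$ above $V(S)$) and the vertical cut $c_E$ (vertex $v_E$ to the right of $V(S)$), at most three of the four points associated with these two cuts can actually be separated off as distinct regions $V(S\setminus\{p_i\})$ adjacent to $V(S)$.

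The heart of the argument for that local claim is a visibility obstruction. The cut $c_S$ is crossed by points $q\in V(S)$ adjacent to it, and such a $q$ must see the point, call it $p$, lying on the far (northern) side of $v_S$; likewise a point adjacent to $c_E$ on the $V(S)$-side must see the point lying on the far (eastern) side of $v_E$. I would consider the four points $p_i$ lying beyond $v_S$ on the left of $v_S$, beyond $v_S$ on the right, beyond $v_E$ above, and beyond $v_E$ below — or rather the two points genuinely separated by each cut — and show that the two cuts, being perpendicular and both bounding the common region $V(S)$, force one of $v_S$, $v_E$ to lie in the "shadow" of the other: the portion of $V(S)$ incident to $c_S$ lies on one side of the vertical line through $v_E$, hence cannot see the eastern point of $c_E$, and symmetrically. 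Pushing this through, one of the four candidate regions must coincide with another or fail to be adjacent to $V(S)$, so only three survive per corner. This is the step I expect to be the main obstacle: it requires enumerating the relative positions of $v_S$ and $v_E$ (and the associated points) with respect to the bounding box of $V(S)$, much as Figure~\ref{OneDirectionB-fig} enumerated cases in Lemma~\ref{TwoPointsDirection-lem}, and checking in each case that a cut blocks the needed visibility — the case analysis is where the real work lies, and where a clean invariant (e.g.\ "$V(S)$ lies in the quadrant cut out by the lines through $v_S$ and $v_E$") is needed to keep it manageable.

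Given the local claim, the counting finishes the theorem. A set of eight regions would need, in the worst case of Figure~\ref{EightPairs-fig}, both a "south--east" corner and a "north--west" corner (and the other two), each contributing its horizontal cut and its vertical cut. But $c_S$ participates in the south--east corner \emph{and} the south--west corner, and $c_N$ in the north--east and north--west corners, etc.; applying the local "at most three per corner" bound to two opposite corners — which together involve all four cuts $c_N,c_S,c_E,c_W$ and hence all eight candidate points — and observing that the bound cannot be saturated simultaneously in a corner and its opposite (the same horizontal and vertical cuts cannot both be "two-sided" in a way consistent with both corners), I would conclude that at most $5$ of the eight regions $V(S\setminus\{p_i\})$ can be realized. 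Since shattering $S$ with $|S|=n$ requires all $n$ regions $V(S\setminus\{p_i\})$ to exist and be distinct, $n\le 5$. Combined with the lower-bound construction of Figure~\ref{lowerbound-fig}, which exhibits five shattered points, the VC-dimension is exactly~$5$. \qed

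Throughout, I would keep the argument metric-honest: Lemmas~\ref{pathconnected-lem} and \ref{Boundary-lem} are metric-free, while Lemma~\ref{TwoPointsDirection-lem} and the corner analysis genuinely use that $L_1$-cuts are axis-parallel and that there are only four cut directions — this is exactly what caps the number of "sides" of $V(S)$ and makes the perpendicular-cuts obstruction bite.
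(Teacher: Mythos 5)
Your outline follows the same skeleton as the paper (corner analysis of one horizontal plus one vertical cut, ``at most three per corner'', then an argument that opposite corners cannot both be saturated), but the mechanism you propose for the key local claim is the wrong one, and it cannot be pushed through. You try to prove ``at most three per corner'' by showing that one of the four candidate regions $V(S\setminus\!\{p_i\})$ must coincide with another or fail to be adjacent to $V(S)$, and at the end you explicitly reduce the theorem to the existence of the $n$ regions $V(S\setminus\!\{p_i\})$ alone. Adjacency of the $(n-1)$-subsets is exactly the information already exhausted by Lemma~\ref{TwoPointsDirection-lem} and Corollary~\ref{SimpleConsec-cor}, and it only yields~$8$: in the hardest corner configuration (the paper's Case~1(b), where both evoking vertices lie ``outward'' of the intersection point $Z$ and all four points $p_1,p_2,p_3,p_4$ lie outside the quadrant $Q$), all four regions \emph{can} simultaneously share a boundary with $V(S)$ across the two cuts, so no region ``fails to be adjacent''. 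The paper's reduction to three points per corner in that case uses a different resource entirely: shattering requires \emph{all} $2^n$ subsets, and one shows that the doubly deficient set $V(S\setminus\!\{p_2,p_3\})$ cannot be realized anywhere in $P$ --- to lose $p_2$ and $p_3$ simultaneously you must leave $V(S)$ through one of the two cuts, and any cut of direction $S$ or $E$ that would remove the second point necessarily also removes $p_1$ or $p_4$. Your sketch never invokes any subset of size $n-2$, so this step is a genuine gap, not just unfinished case work.

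The same gap reappears in your final counting. Granting ``at most three per corner'', two opposite corners involve all four cuts and give only $3+3=6$, so you still need to exclude~$6$; your parenthetical reason --- that the same horizontal and vertical cuts cannot be two-sided in both corners --- does not apply, since opposite corners share no cut. The paper closes this in two steps: a geometric argument that the three-point configuration (which forces some point into the quadrant $Q$) cannot occur at both opposite corners, leaving a specific six-point arrangement with two points each in directions $E$ and $W$ and one point each in $N$ and $S$; and then, once more, a non-shatterability argument for a two-point-deficient set ($V(S\setminus\!\{p_3,p_6\})$ in Figure~\ref{AppAbsoluteFinal-fig}): leaving $V(S)$ to lose $p_3$ or $p_6$ forces the loss of one of the $E$/$W$ points as well. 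So the theorem is really the ``slightly stronger'' statement announced in the introduction --- already the family of subsets of sizes $n$, $n-1$, $n-2$ cannot be shattered for $n\ge 6$ --- and any proof that, like yours, budgets only the $(n-1)$-regions around $V(S)$ cannot get below~$8$, let alone to~$5$. The lower-bound half of your argument (citing Figure~\ref{lowerbound-fig}) is fine.
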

%\begin{proof}
\begin{proof}
The proof works as follows. 
Starting from at most $4$~pairs of potential points  for~$4$ directions as
sketched in Figure~\ref{EightPairs-fig} we  first consider the combination of a horizontal and a vertical cut and the maximum number of sets 
$V(S\setminus\!\{p_i\})$ that can be attained. 
It turns out that for such a single \emph{corner situation} 
either three sets  or two sets $V(S\setminus\!\{p_i\})$ can be constructed
depending on the constitution of the cuts; see Figure~\ref{CaseComplete1-fig}  
and Figure~\ref{CaseComplete23-fig}. 
Then the final situation consists of two opposite corners. 
We show that the configuration for three sets $V(S\setminus\!\!\{p_i\})$ 
cannot happen for two opposite corners; see  Figure~\ref{AppAbsoluteFinal-fig}.  Therefore in total at most 5 sets $V(S\setminus\!\!\{p_i\})$  can be attained. Indeed a combination of Case 1 of Figure~\ref{CaseComplete1-fig} (three sets)  and 
Case 3 of Figure~\ref{CaseComplete23-fig} (two sets) in  the opposite 
corners gives the lowerbiund bound of Figure~\ref{lowerbound-fig}.

%Thus we would like to  exploit \emph{two} cuts maximally. 

%So we first assume that in both directions two different 
%sets $V(S\setminus\!\!\{p_i\})$ and $V(S\setminus\!\!\{p_j\})$ build a boundary with $V(S)$.  

Now as mentioned above consider the combination of a horizontal and a vertical cut.
Assume that both cuts contribute to the boundary of $V(S)$. 
If this is not the case, we would have even less sets $V(S\setminus\!\!\{p_i\})$.
Let us first present the final result for the two cuts, w.l.o.g. a horizontal cut 
of direction~$N$ and 
a vertical cut of direction~$W$. How many sets $V(S\setminus\!\!\{p_i\})$ can be constituted? 
Depending on the position of the evoking vertices
 and up to symmetry  (related to this \emph{corner}) only the three cases as depicted in Figure~\ref{CaseComplete1-fig} (3 sets $V(S\setminus\!\!\{p_i\})$) and Figure~\ref{CaseComplete23-fig}  (2 sets $V(S\setminus\!\!\{p_i\})$)
 can occur. For any additional other point  $p$ the set $V(S\setminus\!\!\{p\})$ has to be separated by a cut of a different direction (here~$S$ or~$E$).  
\begin{figure}
\begin{center}
\includegraphics[scale=0.51,page=1]{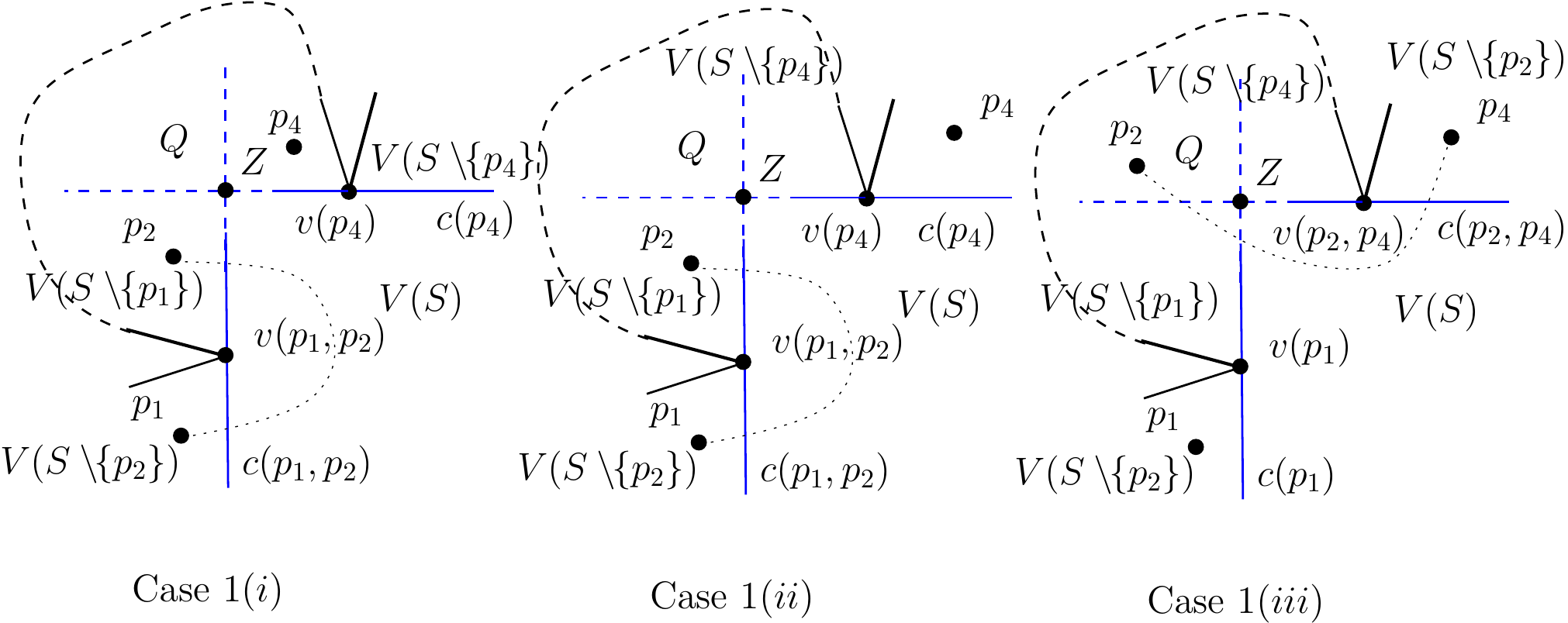}
\caption{
The intersection $Z$ build by (the extensions of) a horizontal and vertical cut  that share the boundary with~$V(S)$. If the  evoking vertices  lie to the right and 
below~$Z$, we can obtain at most three sets $V(S\setminus\!\!\{p_i\})$. 
For this corner up to symmetry the cases 1(i), 1(ii) and 1(iii) can occur. }
\label{CaseComplete1-fig}
\end{center}
\end{figure}
\begin{figure}
\begin{center}
\includegraphics[scale=0.51,page=1]{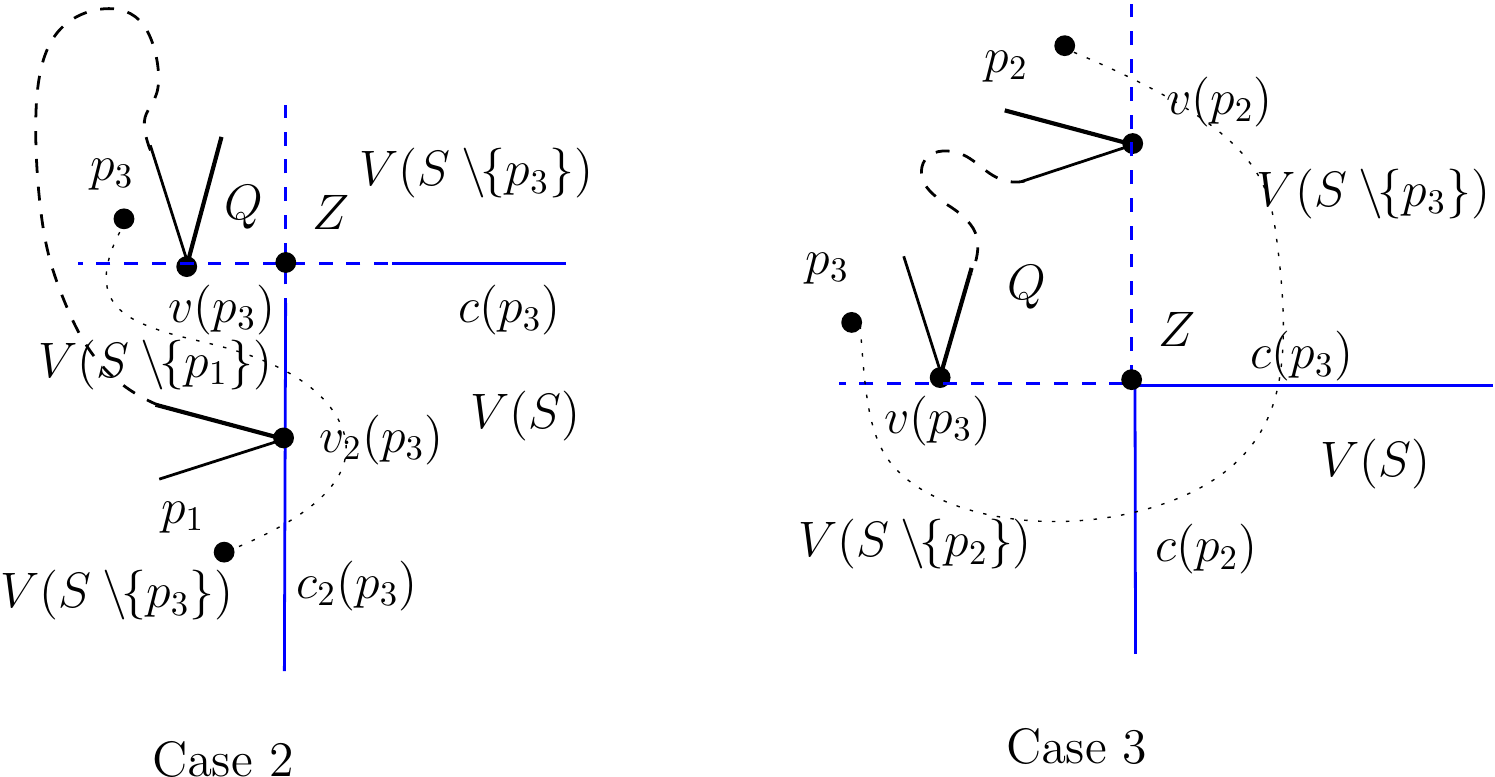}
\caption{
The intersection $Z$ build by (the extensions of) a horizontal and vertical cut 
 that share the boundary with~$V(S)$. If the evoking vertices 
 of the cuts lie to the left and below~$Z$ (Case 2) or to 
 the left and above~$Z$ (Case 3), in this corner up to symmetry we can obtain 
 at most two sets $V(S\setminus\!\!\{p_i\})$. }
\label{CaseComplete23-fig}
\end{center}
\end{figure}

%W.l.o.g. we assume that we consider of cuts of direction~$N$ and~$W$. 
%The case for~$S$ and~$E$ is symmetric. 
For the proof that up to symmetry (related to the corner) only the cases of Figure~\ref{CaseComplete1-fig} (3 sets $V(S\setminus\!\!\{p_i\})$) and Figure~\ref{CaseComplete23-fig}  (2 sets $V(S\setminus\!\!\{p_i\})$)
 can occur, we consider a 
 potential pair $(p_1,p_2)$ and a corresponding vertical cut $c(p_1,p_2)$ evoked by 
vertex $v(p_1,p_2)$ of direction~$W$ and a potential
pair $(p_3,p_4)$ with a corresponding vertical cut $c(p_3,p_4)$ evoked by 
vertex $v(p_3,p_4)$ of direction~$N$.  The meaning is that we would like 
to find out how many different  sets $V(S\setminus\!\!\{p_i\})$ can be separated  at most from $V(S)$ by the given cuts.   
Note that the corner case for~$S$ and~$E$ is symmetric.

 Now, we consider the intersection point~$Z$ of the two lines passing through  $c(p_1,p_2)$ 
and $c(p_3,p_4)$. 
The vertices $v(p_1,p_2)$ and $v(p_3,p_4)$ have a relative position with respect to the intersection point~$Z$. In this corner by symmetry only three cases have 
to be considered.
The upper left axis-parallel quadrant of origin $Z$ is 
denoted by~$Q$.   
\pagebreak
\begin{enumerate}
\item $v(p_1,p_2)$ lies below~$Z$ and 
$v(p_3,p_4)$ lies to the right of~$Z$; see Figure~\ref{AppCase1abVertHoriz-fig}.
\begin{enumerate}

\item Either $p_2$ and/or $p_3$ lies inside $Q$; see 
Figure~\ref{AppCase1abVertHoriz-fig}~1(a).
\item Neither  $p_2$ nor $p_3$ lies inside $Q$; see Figure~\ref{AppCase1abVertHoriz-fig}~1(b).
\end{enumerate}
\item $v(p_1,p_2)$ lies below~$Z$ and 
$v(p_3,p_4)$ lies to the left  of~$Z$; see Figure~\ref{AppCase23VertHoriz-fig}~2. 
\item $v(p_1,p_2)$ lies above~$Z$ and 
$v(p_3,p_4)$ lies to the left  of~$Z$; see  Figure~\ref{AppCase23VertHoriz-fig}~3.
\end{enumerate}

For Case 1 we have two sub-cases. For Case 1(a) 
let $p_3$ lie inside the quadrant $Q$ (upper-left quadrant from~$Z$),   then below $v(p_1,p_2)$ there is no region $V(S\setminus\!\!\{p_2\})$ connected to $V(S)$, 
so $p_2$ does not belong to the two cuts. 

Assume that both  sets $V(S\setminus\!\!\{p_2\})$ and 
$V(S\setminus\!\!\{p_3\})$ for direction~$W$ and~$N$ exist as depicted in Figure~\ref{AppCase1abVertHoriz-fig}~1(b).
The two sets $V(S\setminus\!\!\{p_2\})$ and 
$V(S\setminus\!\!\{p_3\})$  are well-separated from each other. 
Assume that  $p_1$ and $p_4$ exist  for the given cuts or more precisely $V(S\setminus\!\!\{p_1\})$ and $V(S\setminus\!\!\{p_4\})$ are separated by the cuts in direction~$N$ and~$W$, respectively. 
There will be no subsets $V(S\setminus\!\!\{p_2\})$ and $V(S\setminus\!\!\{p_3\})$ separated by cuts of direction~$S$ or~$E$, respectively. 
This holds because a corresponding cut of direction~$S$ has to run above $v(p_1,p_2)$ 
and also separates~$p_1$ and a corresponding cut of direction~$E$ has to run to the left of $v(p_3,p_4)$  and also separates~$p_4$. See for example  Figure~\ref{AppCase1abVertHoriz-fig}~1(b) for the point $p_2$. 
 
 Altogether, if $p_1$ and $p_4$ exist  for the given cuts and
 we would like to shatter $V(S\setminus\!\!\{p_2,p_3\})$ from some point in $P$, we have to 
enter one of the sets $V(S\setminus\!\!\{p_2\})$ or $V(S\setminus\!\!\{p_3\})$ 
 from $V(S)$ separated by the given cuts. Assume that we would like to shatter 
$V(S\setminus\!\!\{p_2,p_3\})$ and move inside $V(S\setminus\!\!\{p_2\})$, the other case is symmetric. If we would like to get 
$p_3$ out of sight, we will also loose visibility to $p_4$. So
$V(S\setminus\!\!\{p_2,p_3\})$  cannot be shattered, if both points $p_1$ and $p_4$ exist or if both points $p_2$ and $p_3$ exist. 
So in any combination at most three points can exist which results in Case 1(i), 
1(ii) or 1(iii) of Figure~\ref{CaseComplete1-fig} or its symmetric counterparts.
%Note that for $(p_1,p_2',p_4)$
%the set $V(S\setminus\!\!\{p_4'\})$ has to be separated by a different direction 
%and this case is omitted. 
%
\begin{figure}
\begin{center}
\includegraphics[scale=0.48,page=1]{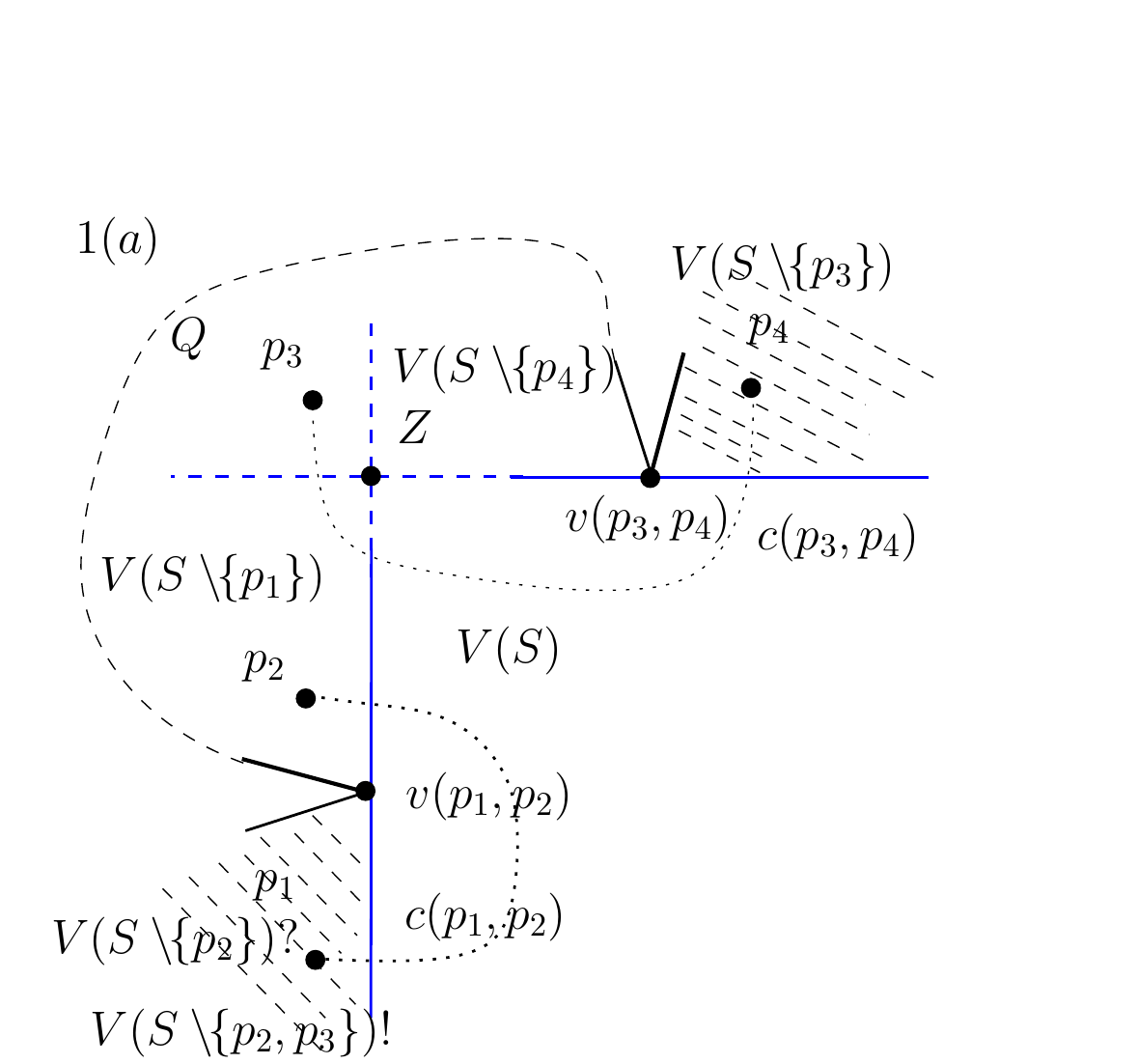}
\includegraphics[scale=0.48,page=1]{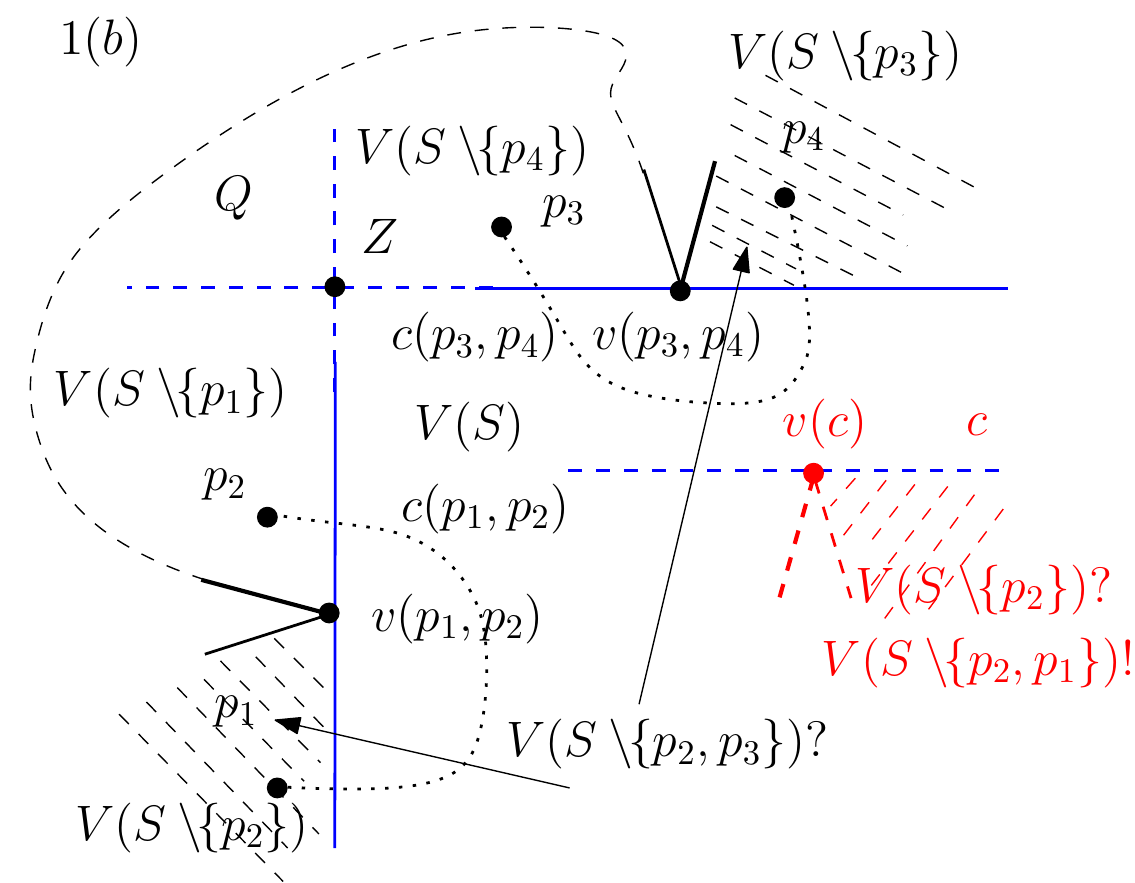}
\caption{If $p_3$ lies inside $Q$ the set $V(S\setminus\!\!\{p_2\})$ 
cannot be separated by the given  cuts. Thus for Case 1(a) either $p_2$ or $p_3$ does not exist.   For Case 1(b) if $p_1$ and $p_4$  exist, the set $V(S\setminus\!\!\{p_2\})$  cannot be 
separated by direction~$S$ and the set 
$V(S\setminus\!\!\{p_3\})$ 
cannot be separated by direction~$E$. Therefore we cannot shatter 
the set  $V(S\setminus\!\!\{p_2,p_3\})$ if $p_4$ and $p_1$ exist. 
This means that either one point from $p_1$ and $p_4$ does not exist or 
one point from  $p_2$ and $p_3$. } 
\label{AppCase1abVertHoriz-fig}
\end{center}
\end{figure}

In  Case 2 $v(p_3,p_4)$ lies to the left  of~$Z$ and $v(p_1,p_2)$ lies below~$Z$ as depicted in  Figure~\ref{AppCase23VertHoriz-fig}~2. 
 First, we notice that $V(S\setminus\!\!\{p_4\})$ has to be separated by
 direction~$E$, therefore we conclude that $p_4$ cannot belong to the
 given cuts.    Additionally,  w.r.t. the position of $p_3$ we have the same situation as given in Case 1(a) because 
 $p_3$ has to lie inside~$Q$. Similar to Figure~\ref{AppCase1abVertHoriz-fig}~1(a), the set $V(S\setminus\!\!\{p_2\})$ cannot be shattered by the given cuts. 
 At least one of the points $p_2$ or $p_3$ cannot exist. 
Note that if $p_3$ does not exist, the cut of direction~$N$ is not used at all. 
Since we would like to exploit both cuts only $p_3$ and $p_1$ remains. 
This results in Case 2 of Figure~\ref{CaseComplete23-fig}.
% Further note that if Case 2 occurs in one corner and $p_1$, $p_2=p_3$ 
% and $p_4$ exist as shown in Figure~\ref{AppCase2VertHoriz-fig}  we cannot shatter more than~$5$ points in total because for separating 
%  $V(S\setminus\!\!\{p_4\})$ we require the direction~$E$, so that 
%  we can get one additional point 
 %
\begin{figure}
\begin{center}
\includegraphics[scale=0.5,page=1]{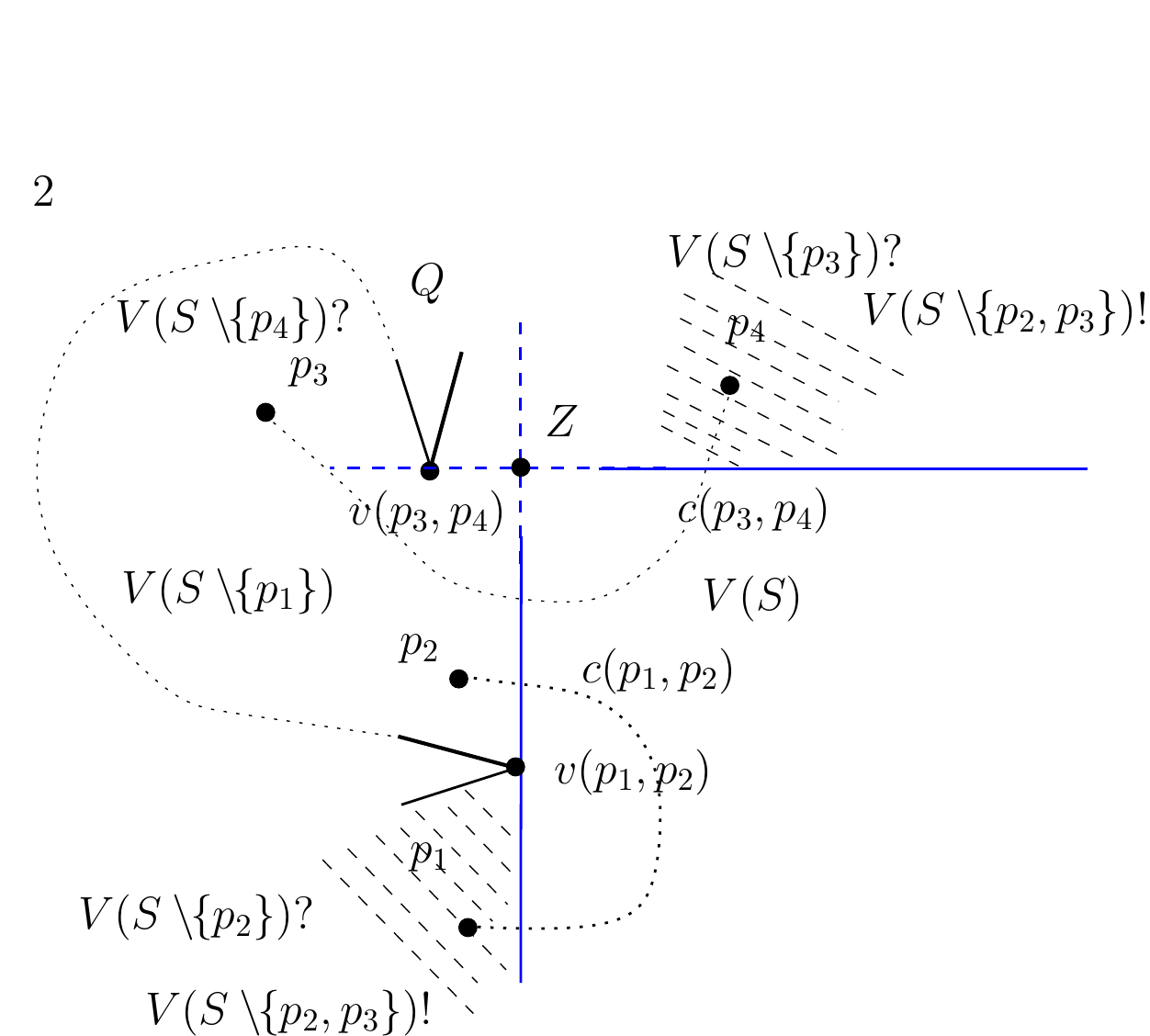}
\includegraphics[scale=0.5,page=1]{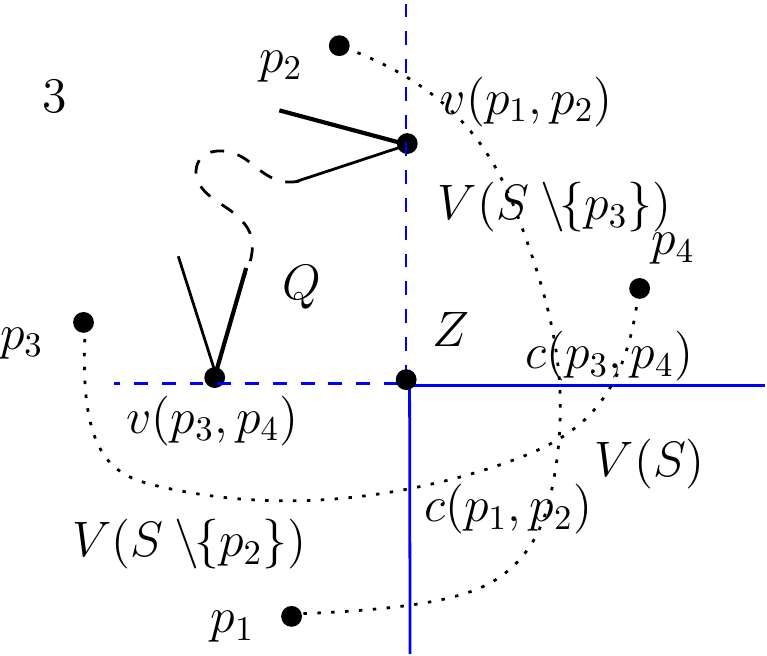}
\caption{For Case 2 the set $V(S\setminus\!\!\{p_2\})$ cannot be separated from $V(S)$ at all. The point $p_4$ or more precisely the set $V(S\setminus\!\!\{p_4\})$ does not belong to the cuts of the given directions and has to be omitted. 
%The points $p_2$ and $p_3$ have to be identical. 
Only $p_3$ exist because otherwise the cut $c(p_3,p_4)$ is useless. 
In Case 3, if $v(p_1,p_2)$ lies above~$Z$ and $v(p_3,p_4)$ lies to the left  of~$Z$ only the sets $V(S\setminus\!\!\{p_2\})$ and  $V(S\setminus\!\!\{p_3\})$ are separated by the given cuts. 
The points $p_1$ and $p_4$ can be omitted.
%Since  $V(S\setminus\!\!\{p_1\})$ and  $V(S\setminus\!\!\{p_4\})$ have to be separated by directions~$S$ and $E$, respectively, 
%only~$5$ points in total can be shattered if Case 2 is applied in one corner. 
}
\label{AppCase23VertHoriz-fig}
\end{center}
\end{figure}

 In the remaining Case 3, $v(p_1,p_2)$ lies above~$Z$ and 
$v(p_3,p_4)$ lies to the left  of~$Z$ and we have a situation as given in 
Figure~\ref{AppCase23VertHoriz-fig}~3. Here $p_2$ has to be above $v(p_1,p_2)$ and $p_3$ lies to the left of $v(p_3,p_4)$. Additionally, $p_1$ has to be 
below  $v(p_1,p_2)$ and $p_4$ lies to the right of $v(p_3,p_4)$. 
The sets $V(S\setminus\!\!\{p_4\})$ and  $V(S\setminus\!\!\{p_1\})$
are not separated from the given cuts, $p_4$ and $p_1$ have to be omitted. 
This results in Case 3 of Figure~\ref{CaseComplete23-fig}.
%%
%\begin{figure}
%\begin{center}
%\includegraphics[scale=0.5,page=1]{VerticalHorizFinal1.pdf}
%\caption{If $v(p_1,p_2)$ lies above~$Z$ and 
%$v(p_3,p_4)$lies to the left  of~$Z$ only the sets $V(S\setminus\!\!\{p_2\})$ and  $V(S\setminus\!\!\{p_3\})$ are separated by the given cuts.
%}
%\label{AppVertHorizFinal-fig}
%\end{center}
%\end{figure}

Now for the final argumentation we have to combine the cases. 
Note that the combination of Case 3 and the application of a symmetric version of  Case 1 in the opposite corner results in our lower-bound construction. 
The above arguments already mean that we can shatter at most~$6$ points, if we 
apply Case 1 and its symmetric version for the opposite corner twice. This is the remaining case. 

Case 1 makes use of three points and allows that some $p_2'$ lies 
inside the given $Q$ as indicated 
by configuration $(p_1,p_2',p_3)$ in Figure~\ref{AppAbsoluteFinal-fig}.  
If this happens for the upper left corner, for shattering $6$ points in total we cannot apply 
Case 1 again to the opposite corner because for the upper right corner 
$Q'$ or for the lower left corner $Q''$ we would have a contradiction to Case 1; 
see Figure~\ref{AppAbsoluteFinal-fig}.  
\begin{figure}
\begin{center}
\includegraphics[scale=0.5,page=2]{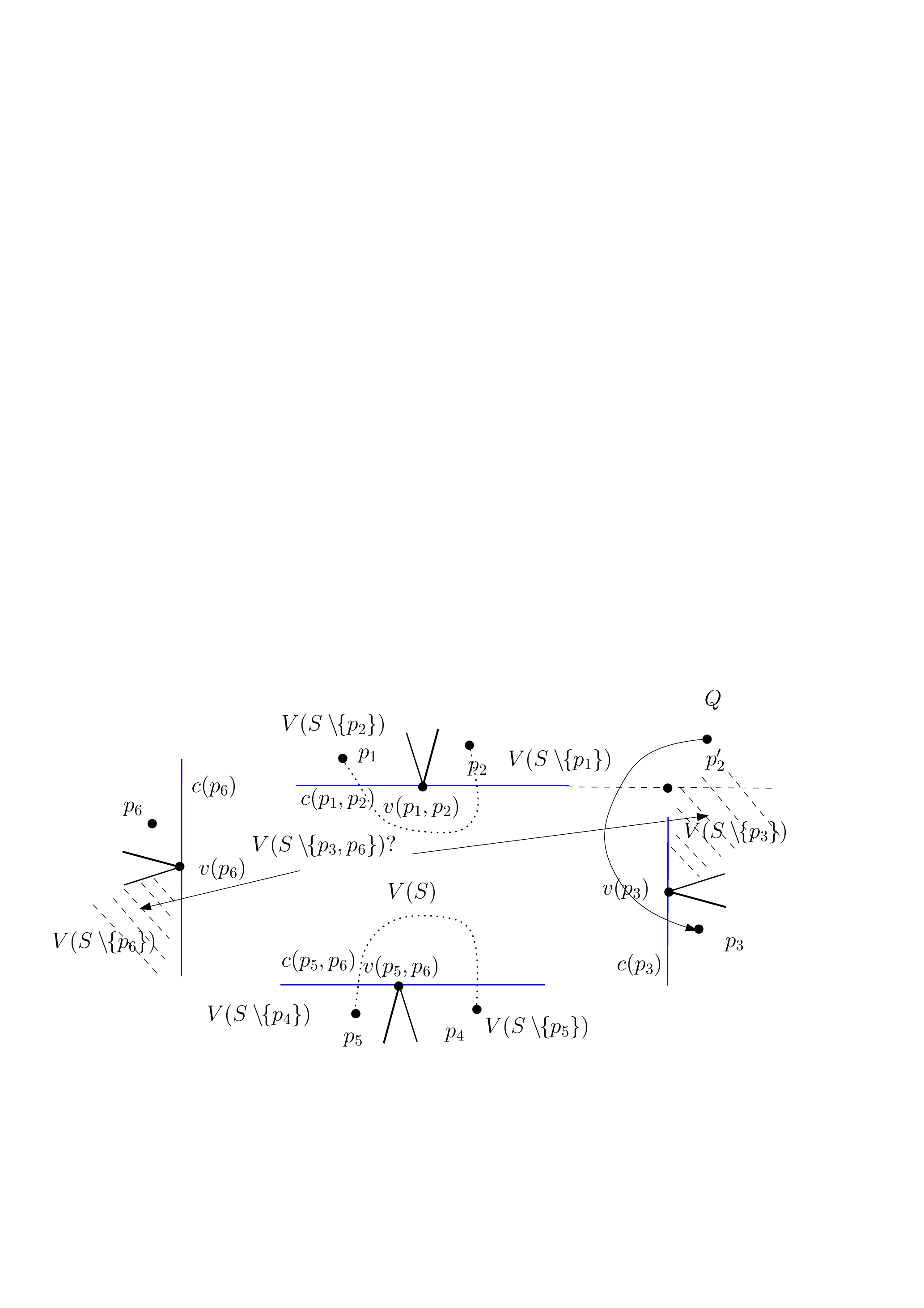}
\caption{The remaining case considers Case 1 twice. 
If $p_2'$ lies in $Q$, application of Case~1 in the opposite corner is not possible. 
So for shattering~$6$ points up to symmetry two 
sets $V(S\setminus\!\!\{p_j\})$ and $V(S\setminus\!\!\{p_i\})$ 
are separated exclusively by opposite directions. Here we have $p_i=p_6$ and 
$p_j=p_3$. The set $V(S\setminus\!\!\{p_6,p_3\})$ cannot be shattered.}
\label{AppAbsoluteFinal-fig}
\end{center}
\end{figure}

This means that we can have at 
most~6 points as depicted in Figure~\ref{AppAbsoluteFinal-fig} 
where for  two opposite directions in each direction two sets $V(S\setminus\!\!\{p_i\})$
  and $V(S\setminus\!\!\{p_j\})$ are separated from $V(S)$ and 
  for the two remaining opposite directions in each direction 
  only one set  $V(S\setminus\!\!\{p_j\})$ is separated from $V(S)$. 
  W.l.o.g. we choose direction~$E$ and~$W$ for the two sets that 
  are separared 
  and~$N$ and~$S$ for the remaining two sets, 
  say $V(S\setminus\!\!\{p_3\})$ and $V(S\setminus\!\!\{p_6\})$ as in 
  Figure~\ref{AppAbsoluteFinal-fig}. 
  Now we can argue that $V(S\setminus\!\!\{p_6,p_3\})$ cannot be 
    shattered. Starting from $V(S)$ we have to move inside $V(S\setminus\!\!\{p_6\})$  or $V(S\setminus\!\!\{p_3\})$. 
    If we would like to loose the visibility to the corresponding opposite point, we definitely also loose visibility to some point on the 
    remaining two directions.
    
    Altogether, we cannot apply Case 1 twice, only~$5$ points can be shattered by $L_1$-visibility polygons. \hfill$\qed$ 
\end{proof}

\vspace*{-0.4cm}
\section{Conclusion}\label{concl-sect}

We have shown that the VC-dimension for $L_1$-visibility of points in simple polygons is exactly~$5$. This result holds for any area 
that is enclosed by a simple Jordan curve. 
The VC-dimension plays an important role for the number of 
guards required for art gallery problems. 
Our prove idea mainly considers the relative position of the 
sets $V(T)\in P$ that sees exactly the subsets $T=S$, $T=S\setminus\!\!\{p_i\}$ and 
$T=S\setminus\!\!\{p_i,p_j\}$. Therefore we even show a slightly stronger result, because shattering these sets can only be 
done for exactly~$5$ points. 
The main open question is, whether we can exploit such properties 
for better upperbounds for the $L_2$-visibility case. 
Figure~\ref{lowerboundL2-fig} shows the best known lower bound for 
$L_2$-visibility.  
 %For $L_1$-visibility inside a simple polygon and for a set $S=\{p_1,p_2,\ldots,p_n\}$ of points we can 
%shatter all subsets $S\!\!\setminus\{p_i\}$ and  all subsets $S\!\!\setminus\{p_i,p_j\}$ and the set $S$ from $P$ for 
%no more than~$n=5$ points and there is an example where this
%subset-shattering for $n=5$ is possible. 
%This means that if we would like to take care that any region inside
%a polygon is seen by at least $k$ points.  
%
\begin{figure}
\begin{center}
\includegraphics[scale=0.35,viewport = 150 35 450 450]{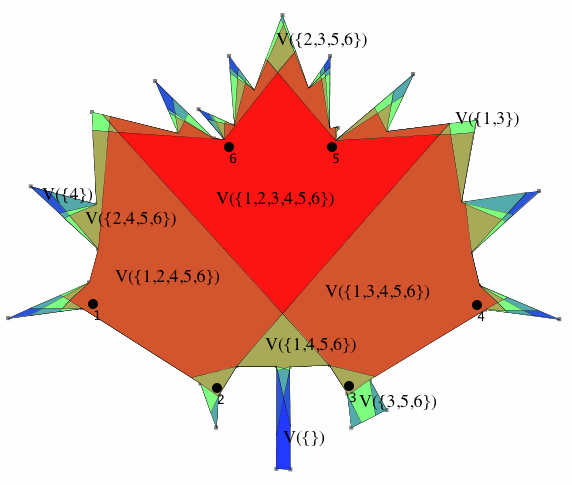}
\caption{The lower bound construction for the VC-dimension 
of points for $L_2$-visibility in simple polygons from Valtr~\cite{v-ggwnp-98}.
All $2^6$ subsets can be shattered, some regions are labeled by the 
point sets that are visible. 
}
\label{lowerboundL2-fig}
\end{center}
\end{figure}

 {
\bibliographystyle{abbrv}
\bibliography{%
        VCDIM}
        }

%\newpage
%%%%%%%%%%%%%%%%%%%%%%%%%%%%%%%%%%%%%%%%%%%%%%%%%%%%%%%%
%\section{Appendix}\label{Appendix}
%%%%%%%%%%%%%%%%%%%%%%%%%%%%%%%%%%%%%%%%%%%%%%%%%%%%%%%%
%\addtocounter{theorem}{-1}
%
%\begin{theorem}\label{App5Points-theo}
%The VC-dimension for $L_1$-visibility w.r.t. points in simple polygons is exactly~$5$. 
%\end{theorem}

%---------------------------- Bibliography -------------------------------

% Please add the contents of the .bbl file

%\input{includes/bibliography.tex}

\end{document}